\newtheorem{lemma}{Lemma}
\newtheorem{proposition}{Proposition}
\newtheorem{corollary}{Corollary}
\newtheorem{remark}{Remark}
\def\E{\mathbb{E}}
\def\T{\mathcal{T}}
\def\bern{\mathrm{Ber}}
\def\pois{\mathrm{Pois}}
\def\cusum{\mathrm{CUSUM}}
\def\lr{\mathrm{LR}}
\def\pl{\mathrm{PL}}
\def\iid{\protect\overset{\mathrm{i.i.d.}}{\sim}}
\newcommand{\ind}[1]{\mathbb{I}_{\{#1\}}} 
\def\spacingset#1{\renewcommand{\baselinestretch}{#1}\small\normalsize} \spacingset{1}
\title{\bf Exact Tests for Offline Changepoint Detection in Multichannel Binary and Count Data with Application to Networks}
\author{
    \normalsize{\textbf{Shyamal K. De}}\thanks{De's research is supported in part by the Science and Engineering Research Board grant MTR/2017/000503 under the MATRICS scheme.}\\
    \normalsize{School of Mathematical Sciences}\\
    \normalsize{NISER, HBNI}\\
    \normalsize{OD 752050, India}\\
    \normalsize{\texttt{shyamalkd@gmail.com}}\\
    \and
    \normalsize{\textbf{Soumendu Sundar Mukherjee}}\thanks{Mukherjee's research is supported in by an INSPIRE Faculty Fellowship from the Department of Science and Technology, Government of India.}\\
    \normalsize{Interdisciplinary Statistical Research Unit}\\ 
    \normalsize{Indian Statistical Institute, Kolkata}\\
    \normalsize{WB 700108, India}\\
    \normalsize{\texttt{soumendu041@gmail.com}}
}
\date{}
\begin{document}

\maketitle

\begin{abstract}
We consider offline detection of a single changepoint in binary and count time-series. We compare exact tests based on the cumulative sum (CUSUM) and the likelihood ratio (LR) statistics, and a new proposal that combines exact two-sample conditional tests with multiplicity correction, against standard asymptotic tests based on the Brownian bridge approximation to the CUSUM statistic. We see empirically that the exact tests are much more powerful in situations where normal approximations driving asymptotic tests are not trustworthy: (i) small sample settings; (ii) sparse parametric settings; (iii) time-series with changepoint near the boundary.

We also consider a multichannel version of the problem, where channels can have different changepoints. Controlling the False Discovery Rate (FDR), we simultaneously detect changes in multiple channels. This ``local'' approach is shown to be more advantageous than multivariate global testing approaches when the number of channels with changepoints is much smaller than the total number of channels.
    
As a natural application, we consider network-valued time-series and use our approach with (a) edges as binary channels and (b) node-degrees or other local subgraph statistics as count channels. The local testing approach is seen to be much more informative than global network changepoint algorithms.
\end{abstract}

\noindent%
{\it Keywords:} Conditional tests; CUSUM statistic; global vs. local testing; multiple testing.

\spacingset{1.45} 
\section{Introduction}
\label{sec:intro}
Changepoint analysis is an important problem in statistics with roots in statistical quality control \citep{page1954continuous,page1957problems,girshick1952bayes}. The goal of changepoint analysis is to decide if there are distributional changes in a given time-series (the detection part), and estimate the change if any (the estimation part). There is a huge body of literature on the univariate changepoint problem. An excellent treatment can be found in the book \cite{brodsky2013nonparametric}. 

Some notable works on the multivariate version of the problem are \cite{zhang2010detecting,siegmund2011detecting,srivastava1986likelihood,james1992asymptotic} in parametric settings, and \cite{harchaoui2009kernel,lung2011homogeneity,chen2015} in non-parametric settings.

We should mention that there are two types of changepoint problems: (a) offline, where the whole time-series is available to the statistician; (b) online, where data is still arriving at the time of analysis. We will be concerned with the offline problem in this article.

Although a lot of work has been done on changepoint detection for continuous time-series data, results for discrete data are lacking, especially in ``small sample'' settings where the length of the time-series is relatively small. In this article, our main goal is to develop methods for offline changepoint detection for binary and count data that have good performance in small sample settings.

We adapt well-known conditional two-sample tests for binary and count data to the changepoint setup using a multiple testing approach. We also consider exact tests based on natural statistics such as the CUSUM statistic and the LR statistic. We conduct a comprehensive small-sample power analysis of these tests and compare them against the large sample CUSUM test based on a Brownian bridge approximation \citep{brodsky2013nonparametric}. We find that, in small sample scenarios, and in cases where the true changepoint lies near the boundary, the exact tests are significantly more powerful than large sample tests.

Although these methods are developed for single changepoint problems, they seem to work well when multiple changepoints are present, especially if there is one strong change. We report some empirical findings in this direction in the appendix.

We then consider multichannel binary or count time-series. Using a False Discovery Rate (FDR) controlling mechanism, we simultaneously test for changepoints in all the channels. This ``local'' approach vastly outperforms the ``global'' approach of using some statistic of all channels together (e.g., a vector CUSUM statistic), when the number of channels with changepoints is much smaller than the total number of channels.

As an application of this approach, we consider local vs. global testing in network-valued time-series. Although there has been a recent surge of interest in network changepoints \citep{peel2015detecting,roy2017change,mukherjee2018thesis,wang2018optimal,padilla2019change,zhao2019change,bhattacharjee2020change,bhattacharyya2020optimal}, we note that the existing works are focused on large sample asymptotics and use global statistics for detection or estimation. If we use edges (resp. node degrees or some other local subgraph statistics such as local triangle counts) as separate channels, then we have multichannel binary (resp. count) data. We compare the proposed local approach against a standard CUSUM-based global approach in real-world networks. We see that, in addition to picking up strong global changes, the local approach can identify relatively weak and rare changes.

The rest of the paper is organised as follows. In Section~\ref{sec:setup}, we describe the problem set-up precisely and detail our methodology. In Section~\ref{sec:mult}, we discuss our multiple testing based local approach for multichannel changepoint detection. In Section~\ref{sec:simu}, we report our simulations: In Section~\ref{sec:exact-asymp}, we perform a comprehensive power analysis of the various proposed methods against existing approaches. In Section~\ref{sec:glob-loc}, we compare the local testing approach vs. multivariate CUSUM-based global testing approaches in multichannel problems. More detailed results are provided in the appendix. Then, in Section~\ref{sec:realdata}, we apply our methodology on two real-life examples: a time-series of US senate voting pattern networks, and another time-series of phone-call networks. We conclude the paper with a discussion in Section~\ref{sec:discuss}.

\section{Set-up and methodology}
\label{sec:setup}
Suppose that we have time-indexed independent variables $X_1, \ldots, X_T$, with $\E(X_i) = \pi_i$. We want to test if the $\pi_i$ have changed over time.
The single changepoint testing problem is:
\begin{align}\label{eq:cpd-setup}\nonumber
  H_0: \,\,&\pi_i = \pi \text{ for all }1 \le i \le T\,\, \text{(no change)} \\ \nonumber
   &\text{vs. } \\
  H_1: \,\,&\exists 1 \le \tau \le T - 1 \text{ such that } \pi_i = \pi_1 \ind{i \le \tau} + \pi_2 \ind{i > \tau}\,\, \text{(at least one change)}.
\end{align}
We are interested in the situation where the $X_i$'s are binary or counts. The binary case is obviously modelled by a independent Bernoulli time-series, whereas we model count data using the Poisson distribution. Keeping that in mind, let us now discuss some natural test statistics for the testing problem \eqref{eq:cpd-setup}. We begin by deriving the likelihood ratio statistic.
\vskip10pt
\noindent
\textbf{The Likelihood Ratio (LR) statistic.}
\vskip5pt
\noindent
\textbf{Binary data:} Note that under $H_1$, the likelihood of the data is
\[
  L(\pi_1, \pi_2, \tau) = \pi_1^{S_{\tau}} (1 - \pi_1)^{\tau - S_{\tau}} \times \pi_2^{S_T - S_{\tau}} (1 - \pi_2)^{T - \tau - (S_T - S_{\tau})}.
\]
The maximizers of $L$ for a fixed $\tau$ are $\hat{\pi}_1 = \frac{S_{\tau}}{\tau}$ and $\hat{\pi}_2 = \frac{S_T - S_{\tau}}{T - \tau}$. Thus the profile log-likelihood for $\tau$ is
\[
  \ell_{\pl}(\tau) = -\tau H(\hat{\pi}_1) - (T - \tau) H(\hat{\pi}_2),
\]
where $H(x) = - x \log x - (1 - x) \log (1 - x)$  is the entropy of a $\bern(x)$ variable. Define
\[
  \T_b = \min_{1 \le t \le T - 1} \bigg[t H\bigg(\frac{S_t}{t}\bigg) + (T - t) H\bigg(\frac{S_T - S_t}{T - t}\bigg)\bigg].
\]
Then the LR statistic is
\begin{equation}\label{eq:LR-bin}
  \T_{\lr}^{(b)} = -2(\ell_0 - \ell_1) = -2 (- TH(S_T/T) + \T_b).
\end{equation}
We would reject $H_0$ for large values of this statistic.
\vskip5pt
\noindent
\textbf{Count data:} Recall that we are modeling counts using the Poisson distribution. Under $H_1$, the likelihood of the data is
\[
  L(\pi_1, \pi_2, \tau) \propto e^{-\tau \pi_1}\pi_1^{S_{\tau}}  e^{-(T - \tau) \pi_2}\pi_2^{S_T - S_{\tau}}.
\]
The maximizers of $L$ for a fixed $\tau$ are $\hat{\pi}_1 = \frac{S_{\tau}}{\tau}$ and $\hat{\pi}_2 = \frac{S_T - S_{\tau}}{T - \tau}$. Therefore the profile log-likelihood for $\tau$ is
\[
  \ell_{\pl}(\tau) = -\tau G(\hat{\pi}_1) - (T - \tau) G(\hat{\pi}_2),
\]
where $G(x) = x (1 - \log x)$. Define
\[
  \T_c = \min_{1 \le t \le T - 1} \bigg[t G\bigg(\frac{S_t}{t}\bigg) + (T - t) G\bigg(\frac{S_T - S_t}{T - t}\bigg)\bigg].
\]
Then the LR statistic is
\begin{equation}\label{eq:LR-count}
  \T_{\lr}^{(c)} = -2(\ell_0 - \ell_1) = -2 (- TG(S_T/T) + \T_L).
\end{equation}
We would reject $H_0$ for large values of this statistic.
\vskip10pt
\noindent
\textbf{CUSUM statistic.}
\vskip5pt
\noindent
A well-known and often-used statistic in changepoint problems is the so-called CUSUM statistic. For $0< a < b <1$, suppose $aT$ and $bT$ are known upper and lower bounds on the locations of the potential changepoints. For $0\le \delta \le 1$, the CUSUM statistic is defined as
\begin{equation}\label{eq:CUSUM}
    \T_{\cusum}^{(\delta)} = \max_{ aT  \le t \le bT } \left[ \frac{t}{T}\left(1-\frac{t}{T}\right) \right]^{\delta} \Bigg | \frac{S_t}{t} - \frac{S_T - S_t}{T - t} \Bigg |.
\end{equation}
This can be used with both binary and count data.

\subsection{Asymptotic tests}
First, we will consider an asymptotic test based on the CUSUM statistic \eqref{eq:CUSUM}. The asymptotic null distribution can be calculated using a Brownian bridge approximation. For details see, e.g., \cite{brodsky2013nonparametric}.
\begin{proposition}\label{prop:BB}
Let $B^0(t)$ denote a standard Brownian bridge. Under $H_0$, $\pi_i = \pi$ for all $i$, and 
\[
    \frac{\sqrt{T} \, \T_{\cusum}^{(\delta)}}{\sqrt{\pi(1-\pi)}}  \xrightarrow[T \to \infty]{ \mathcal{L}}  M_{ab}^{(\delta)},
\]
where $M_{ab}^{(\delta)} =  \max_{a \le t \le b} \frac{|B^0(t)|}{(t(1 - t))^{1 - \delta}}$.
\end{proposition}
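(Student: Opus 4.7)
The plan is to reduce the proposition to a combination of Donsker's functional central limit theorem (FCLT) and the continuous mapping theorem. The statement as written features the Bernoulli variance $\pi(1-\pi)$, so I focus on the binary case; the Poisson case is identical after replacing $\pi(1-\pi)$ by $\pi$.

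The first step is to rewrite the statistic so that the underlying partial-sum process becomes visible. Setting $u = t/T$ and using the identity $\frac{S_t}{t} - \frac{S_T-S_t}{T-t} = \frac{S_t - u\, S_T}{t(1-u)}$, a direct calculation gives
$$\bigg[\frac{t}{T}\bigg(1-\frac{t}{T}\bigg)\bigg]^\delta \bigg|\frac{S_t}{t} - \frac{S_T - S_t}{T-t}\bigg| = \frac{|S_{\lfloor Tu \rfloor} - u\, S_T|}{T\,[u(1-u)]^{1-\delta}},$$
so that
$$\frac{\sqrt{T}\,\T_{\cusum}^{(\delta)}}{\sqrt{\pi(1-\pi)}} = \max_{a \le u \le b} \frac{|Y_T(u)|}{[u(1-u)]^{1-\delta}}, \qquad Y_T(u) := \frac{S_{\lfloor Tu \rfloor} - u\, S_T}{\sqrt{T\pi(1-\pi)}}.$$

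Next, I would invoke Donsker's FCLT for i.i.d. square-integrable summands to conclude that the standardized partial-sum process $X_T(u) := (S_{\lfloor Tu \rfloor} - Tu\pi)/\sqrt{T\pi(1-\pi)}$ converges weakly in $D[0,1]$ (Skorokhod topology) to a standard Brownian motion $B$. Since $Y_T(u) = X_T(u) - u\, X_T(1)$ and the map $f \mapsto f(\cdot) - (\cdot)\,f(1)$ is continuous from $D[0,1]$ to itself, the continuous mapping theorem yields $Y_T \Rightarrow B(\cdot) - (\cdot)\,B(1) = B^0$, the standard Brownian bridge on $[0,1]$.

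Finally, the restriction $0 < a < b < 1$ ensures that $[u(1-u)]^{1-\delta}$ is bounded away from $0$ on $[a,b]$, which in turn makes $\Phi : f \mapsto \max_{a \le u \le b} |f(u)|/[u(1-u)]^{1-\delta}$ a continuous functional on $D[a,b]$ under the uniform topology. A second application of the continuous mapping theorem gives $\Phi(Y_T) \xrightarrow{\mathcal{L}} \Phi(B^0) = M_{ab}^{(\delta)}$, which is the claim. The main subtlety --- and the reason the bounds $a,b$ appear at all --- is precisely this truncation: allowing $a = 0$ or $b = 1$ breaks continuity of $\Phi$ because the weight blows up at the endpoints, and one would then need Darling--Erd\H os-type sharp estimates on the oscillation of the normalized partial-sum process near $0$ and $1$ to recover a non-degenerate limit.
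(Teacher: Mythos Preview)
Your argument is correct and is exactly the standard route: rewrite the statistic so that the centered partial-sum process appears, apply Donsker's FCLT to get a Brownian motion limit, pass to the bridge via the continuous map $f\mapsto f-(\cdot)f(1)$, and finish with the continuous mapping theorem for the weighted sup functional on $[a,b]$. The algebra in your first display is clean, and your remark that the truncation $0<a<b<1$ is what makes $\Phi$ continuous is the key point.

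The paper, however, does not supply its own proof of this proposition at all: it simply states the result and refers the reader to \cite{brodsky2013nonparametric} for details. So there is nothing to compare against beyond noting that your proof is the classical one that the cited reference would contain. One very minor point you could make explicit is the passage from the discrete maximum over $t\in\{ \lceil aT\rceil,\ldots,\lfloor bT\rfloor\}$ to the continuous supremum over $u\in[a,b]$; this is handled by the fact that $Y_T$ is piecewise linear (or step) in $u$ and the limit $B^0$ is almost surely continuous, so the discrepancy is $o_P(1)$, but it is a routine detail.
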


\begin{corollary}\label{cor:BBapprox}
Let $\widehat{\pi} = \frac{1}{T} \sum_{s=1}^T X_s \xrightarrow[H_0, T \to \infty]{a.s.} \pi$. Then, under $H_0$, 
\[
    \frac{\sqrt{T} \, \T_{\cusum}^{(\delta)}}{\sqrt{\widehat{\pi}(1-\widehat{\pi})}}  \xrightarrow[T \to \infty]{ \mathcal{L}} M_{ab}^{(\delta)}.
\]
\end{corollary}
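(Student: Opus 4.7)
The plan is to derive the corollary from Proposition~\ref{prop:BB} via a direct application of Slutsky's theorem, using the almost-sure convergence of $\widehat{\pi}$ already stated in the corollary. First I would write the identity
\[
  \frac{\sqrt{T} \, \T_{\cusum}^{(\delta)}}{\sqrt{\widehat{\pi}(1-\widehat{\pi})}} = \frac{\sqrt{T} \, \T_{\cusum}^{(\delta)}}{\sqrt{\pi(1-\pi)}} \cdot \frac{\sqrt{\pi(1-\pi)}}{\sqrt{\widehat{\pi}(1-\widehat{\pi})}},
\]
so that the asymptotic behaviour of the left-hand side is reduced to understanding the two factors on the right separately.

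Next, I would handle each factor. The first factor converges in distribution to $M_{ab}^{(\delta)}$ under $H_0$ directly by Proposition~\ref{prop:BB}. For the second factor, $\widehat{\pi} \xrightarrow{a.s.} \pi$ under $H_0$ by the strong law of large numbers (the $X_s$ are i.i.d.\ $\bern(\pi)$ in the binary case). Since $\pi \in (0,1)$ under the standing modeling assumption, the map $x \mapsto \sqrt{\pi(1-\pi)}/\sqrt{x(1-x)}$ is continuous at $x = \pi$, and the continuous mapping theorem yields
\[
  \frac{\sqrt{\pi(1-\pi)}}{\sqrt{\widehat{\pi}(1-\widehat{\pi})}} \xrightarrow{a.s.} 1,
\]
which in particular gives convergence in probability to $1$.

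Finally, combining these two facts via Slutsky's theorem delivers the claimed convergence in distribution of the product to $M_{ab}^{(\delta)} \cdot 1 = M_{ab}^{(\delta)}$. There is no genuine obstacle here; the only subtlety is the boundary assumption $\pi \in (0,1)$ (needed to keep $\sqrt{x(1-x)}$ continuous and nonzero at $x=\pi$), which is part of the standard Bernoulli modeling. An analogous statement for the Poisson case would use $\widehat{\pi} \xrightarrow{a.s.} \pi > 0$ and the continuity of $x \mapsto 1/\sqrt{x}$ at $\pi$, with $\sqrt{\pi(1-\pi)}$ replaced by the appropriate Poisson standard deviation $\sqrt{\pi}$ in the normalization.
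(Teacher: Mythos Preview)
Your argument is correct and is exactly the intended route: the paper states the corollary immediately after Proposition~\ref{prop:BB} without writing out a proof, because it follows at once by Slutsky's theorem from $\widehat{\pi}\to\pi$ and the continuous mapping theorem, precisely as you outline. There is nothing to add.
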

Using this result we can perform an asymptotic test for $H_0$ in the ``large sample'' regime where $T$ is large. This test would be good when $\pi$ is not too small (so that the underlying normal approximations to the partial sums $\sum_{s = 1}^t X_s$ go through). It is well-known that in this asymptotic framework the choice $\delta = 1/2$ is the best for estimation (See, e.g., \cite{brodsky2013nonparametric}, Chapter 3), while $\delta = 1$ is the best for minimizing type-1 error, $\delta = 0$ for minimizing type-2 error. However, in the small sample situations explored in this paper we do not see such a clear-cut distinction (see Section~\ref{sec:simu}).

\subsection{Conditional tests}
\label{sec:conditionaltests}
Our exact tests are based on the following simple lemma.
\begin{lemma}\label{lem:suff}
Suppose $X_1$, \ldots, $X_T$ are independent Bin$(n_i, \pi)$ (or Poisson$(\pi)$). Let $S_i = \sum_{j = 1}^i X_i$. Then the joint distribution of $(S_1, \ldots, S_{T - 1})$ given $S_T$ does not depend on $\pi$.
\end{lemma}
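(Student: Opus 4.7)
The plan is to recognize that this is essentially a statement about sufficiency: $S_T$ is sufficient for $\pi$ in both the binomial and Poisson models, so the conditional distribution of the data given a sufficient statistic is by definition free of the parameter. Everything else is pushing this observation through to the cumulative sums $(S_1, \ldots, S_{T-1})$.

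First, I would write down the joint mass function of $(X_1, \ldots, X_T)$ and exhibit an explicit Fisher--Neyman factorization. In the binomial case,
\[
  p(x_1, \ldots, x_T; \pi) = \bigg(\prod_{i=1}^T \binom{n_i}{x_i}\bigg)\, \pi^{S_T}(1-\pi)^{N-S_T},
\]
with $N = \sum_i n_i$, so $S_T \sim \mathrm{Bin}(N, \pi)$ and sufficiency is manifest. In the Poisson case the likelihood is proportional to $e^{-T\pi}\pi^{S_T}$, so $S_T \sim \pois(T\pi)$ plays the same role.

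Second, dividing by the marginal of $S_T$ yields the explicit conditional distribution. In the binomial case one reads off the multivariate hypergeometric
\[
  \pr(X_1 = x_1, \ldots, X_T = x_T \mid S_T = s) = \frac{\prod_{i=1}^T \binom{n_i}{x_i}}{\binom{N}{s}},
\]
and in the Poisson case the multinomial $\mathrm{Mult}(s;\, 1/T, \ldots, 1/T)$. Both are manifestly independent of $\pi$.

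Finally, since $(S_1, \ldots, S_{T-1})$ is a deterministic (linear) function of $(X_1, \ldots, X_T)$, its conditional law given $S_T$ is the pushforward of a $\pi$-free distribution and therefore also does not depend on $\pi$. There is no substantive obstacle here: the lemma is a direct consequence of sufficiency, and the only real choice is stylistic, namely whether to invoke the Fisher--Neyman theorem abstractly or to present the explicit conditional laws above, the latter being preferable since these exact formulas are what drive the conditional $p$-value computations used later in the paper.
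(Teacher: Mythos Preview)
Your proposal is correct and follows the same logical skeleton as the paper's proof: $S_T$ is sufficient for $\pi$, hence the conditional law of $(X_1,\ldots,X_T)$ given $S_T$ is $\pi$-free, hence so is that of any function of the $X_i$'s, in particular the partial sums. The paper's proof is the terse two-line version that simply invokes sufficiency abstractly, whereas you spell out the Fisher--Neyman factorization and the resulting multivariate hypergeometric and multinomial conditionals explicitly; as you yourself note, this is a stylistic rather than a mathematical difference, and your explicit formulas do indeed reappear later in the paper when the conditional $p$-values are computed.
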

\begin{proof}
Since $S_T$ is sufficient for $\pi$, the distribution of $(X_1, \ldots, X_T)$ given $S_T$ does not depend on $\pi$. Hence the same holds for $(S_1, \ldots, S_{T - 1})$.
\end{proof}
\noindent
\textbf{Approach 1.}
\vskip5pt
\noindent
By Lemma~\ref{lem:suff}, $\T_{LR} \mid S_T$ does not depend on $\pi$ under $H_0$. So we can do an exact conditional test. In fact, we can use the statistics $\T_{b}$ (or $\T_c$) which is equivalent to $\T_{\lr}^{(b)}$ for conditional testing. Similarly, we can do a CUSUM based exact test, since the CUSUM statistic $\T_{\cusum}^{(\delta)}$ is a function of the partial sums $S_t, t < T$.
\vskip10pt
\noindent
\textbf{Approach 2.}
\vskip5pt
\noindent
Note that we can decompose $H_1$ as a disjoint union of the following $(T - 1)$ hypotheses: 
\[
  H_{1i}: \tau = i, 1 \le i \le T - 1, 
\]
and test these separately against $H_0$, and, finally, rejecting $H_0$ if one of these $T - 1$ hypotheses gets rejected.

\vskip5pt
\noindent
\textbf{Binary data:}
Suppose $X_i\sim\bern(\pi_i)$. Note that if we use $S_i = \sum_{j = 1}^i X_j$ as a test statistic for testing $H_0$ against $H_{1i}$, then, under $H_0$, 
\[
  S_i \mid S_T \sim \mathrm{Hypergeometric}(i, S_T, T).
\]
Therefore, we get a $p$-value $p_i$ from this conditional distribution as
\[
  p_i = \sum_{q \,:\, f(q;\, i, S_T, T) \le f(S_i;\, i, S_T, T)} f(q;\, i, S_T, T),
\]
where $f(q;\, i, S_T, T)$ is the PMF of the Hypergeometric$(i, S_T, T)$ distribution.

\vskip5pt
\noindent
\textbf{Count data:}
For count data $X_i \sim$ Poisson$(\pi_i)$, we can use the same procedure as above using the observation that, under $H_0$,
\[
  S_i \mid S_T \sim \mathrm{Binomial}\bigg(S_T, \frac{i}{T}\bigg).
\]
In this case, we get a $p$-value $p_i$ from the above conditional distribution as
\[
  p_i = \sum_{q \,:\, g(q;\, S_T, i/T) \le g(S_i;\, S_T, i/T)} g(q;\, S_T, i/T),
\]
where $g(q;\, S_T, i/T)$ is the PMF of the Binomial$(S_T, i/T)$ distribution.

\vskip10pt
\noindent
\textbf{Multiplicity correction:}
Once we get hold of the individual $p$-values, we can try to control the \emph{familywise error rate} (FWER). It follows from Lemma~\ref{lem:suff} that $(p_1, \ldots, p_{T - 1})$ given $S_T$ does not depend on $\pi$ under $H_0$. Thus we can exactly simulate the distribution of $p_{(1)}$ using Monte Carlo. Denoting by $r_{\alpha, T}$ the lower $\alpha$-th quantile of $p_{(1)}$, we reject $H_0$, if $p_{(1)} \le r_{\alpha, T}$.

\subsection{Changepoint estimation}
\label{sec:estimation}
While we are interested in changepoint detection, the testing methods give bona-fide estimators of the underlying changepoint. For example, the likelihood ratio statistics are based on maximizing the profile log-likelihood $\ell_{\pl}(\tau)$ and the maximizer gives an estimate of $\tau$. Similarly, for the CUSUM statistic, the maximizer in the definition gives one estimate. As for the conditional testing approach, the minimizing index of the individual $p$-values gives an estimate of the changepoint. One can show that, under a single changepoint model, these estimates are consistent, because all these objective functions are based on the cumulative average $S_t/t$, and one can use the fact that a properly rescaled version of this process converges to a Brownian motion under the null hypothesis of no changepoints. For example, an analysis of the CUSUM estimator along these lines can be found in \cite{brodsky2013nonparametric}. In Section~\ref{sec:realdata}, we obtain channel-specific estimates of changepoints in this way and plot their histograms (see Figures~\ref{fig:uss_hist}, \ref{fig:mit_hist}, and \ref{fig:mitdeg_hist}).

A statistically valid procedure for simultaneous detection and estimation may be obtained by using an even-odd sample splitting: separate out the observations with even and odd time indexes, use the even ones for testing, and based on the decision, use the odd one for further estimation. However, as with any sample splitting method, this method will suffer a loss in power in small sample scenarios.

\section{The multichannel case: global vs. local testing}
\label{sec:mult}
Suppose we observe an $m$-variate $(m>1)$ independent time-series 
\[
    \boldsymbol{X}_1, \dots, \boldsymbol{X}_{\tau} \iid F_1, \, \boldsymbol{X}_{\tau+1}, \dots, \boldsymbol{X}_{T} \iid F_2,
\]
where $F_1$ and $F_2$ are $m$-variate distributions. 
We would like to test the global null $H_0:$ ``no change in the $m$-variate time-series'', i.e., $H_0: \tau = T$. Since permutations of $\boldsymbol{X}_{1}, \dots, \boldsymbol{X}_{T}$ are equally likely under $H_0$, a natural approach for testing $H_0$ is to adopt a permutation test using the global CUSUM statistic 
\[
C^{(\delta)} = \max_{1 \le t \le T-1} \left[ \frac{t}{T}\left(1-\frac{t}{T}\right) \right]^{\delta} \bigg|\bigg| \frac{1}{t} \sum_{i=1}^t \boldsymbol{X}_i - \frac{1}{T-t} \sum_{i=t+1}^T \boldsymbol{X}_i \bigg|\bigg| \, \text{ for } \delta \in [0,1],
\]
where $|| . ||$ denote any suitable norm in $\mathbb{R}^m$. Permuting the time-series multiple times, we construct a randomized size-$\alpha$ test that rejects $H_0$ for a large value of observed $C^{(\delta)}$. However, this test cannot determine which channels were responsible for the global change.   

Now suppose that $\boldsymbol{X}_i = (X_{i, 1}, \ldots, X_{i, m})$ for $i = 1, \ldots,T$, and the time-series for the $j$-th channel is
\[
    X_{j, 1}, \ldots, X_{j, \tau} \iid F_{j, 1}, \, X_{j, \tau + 1}, \ldots, X_{j, T} \iid F_{j, 2} \, \text{ for } j = 1, 2, \ldots, m.
\] 
In this article, $F_{j, 1} \equiv \bern(p_{j, 1}) \text{ or } \pois(\lambda_{j, 1})$ and $F_{j, 2} \equiv \bern(p_{j, 2}) \text{ or } \pois(\lambda_{j, 2})$ depending on whether we deal with binary or count data. The global null $H_0$ is equivalent to $\cap_{j=1}^m H_{0, j}$ where $H_{0, j}:$ ``no change in the $j$-th channel''.  A local approach for testing $H_0$ would be to compute $p$-values corresponding to $H_{0, j}$ for $j = 1, \ldots, m$, and apply some suitable multiple testing procedure controlling FWER or FDR. Note that FDR equals FWER under $\cap_{j=1}^m H_{0, j}$. Since FDR controlling methods are known to be more powerful than traditional FWER controlling methods such as Bonferroni and Holm's methods (\cite{holm1979simple}) when $m$ is large, we use some popular methods for FDR control.  

In this article, we consider the celebrated Benjamini-Hochberg (BH) step-up procedure proposed by \cite{benjamini1995controlling}. \cite{BenjaminiYekutieli01} proved that the BH method controls FDR at a pre-fixed level when $p$-values are mutually independent or they have certain positive dependence.

Let $\mathcal{R} = \{1\le j \le m: H_{0, j} \text{ is rejected} \}$ be the rejection set obtained from some FDR controlling procedure. The global null $H_0$ is rejected if and only if $\mathcal{R}$ is nonempty. This test for $H_0$ is referred as a local test $\phi := \ind{\mathcal{R} \neq \emptyset}$.

\begin{remark}
If FDR is controlled at level $\alpha$, then $\phi$ is a valid level-$\alpha$ test for the global null $H_0$ since $P_{H_0}( H_0 \text{ rejected}) = P_{H_0}(\mathcal{R} \neq \emptyset) = P_{H_0}( \cup_{j=1}^m H_{0, j} \text{ rejected})=$ FWER $=$ FDR $\le \alpha$. 
\end{remark}

\begin{remark}
    The local testing approach enjoys a few advantages over the global testing approach. First, local testing is much more informative in the sense that channels responsible for the global change, if any, are also determined. Second, under the rare signal regime where signals are available only in a few out of a large number of channels, global tests may fail to detect a change whereas local tests are more likely to detect the change as they scrutinize all channels. These points are empirically demonstrated in the simulations of Section~\ref{sec:glob-loc}.
\end{remark}

\begin{remark}
    Although we have formulated the local testing approach for a single global changepoint so as to compare it to the global testing approach, it is clear that the former applies to situations where individual channels have different changepoints. This advantage of the local testing approach over the global testing approach will be clear in Section~\ref{sec:realdata}, where we plot histograms of detected local changepoints.
\end{remark}

\section{Simulations}
\label{sec:simu}
\def\pois{\mathrm{Pois}}

\begin{figure}[!ht]
    \centering
    \begin{tabular}{cc}
        \includegraphics[scale = 0.45]{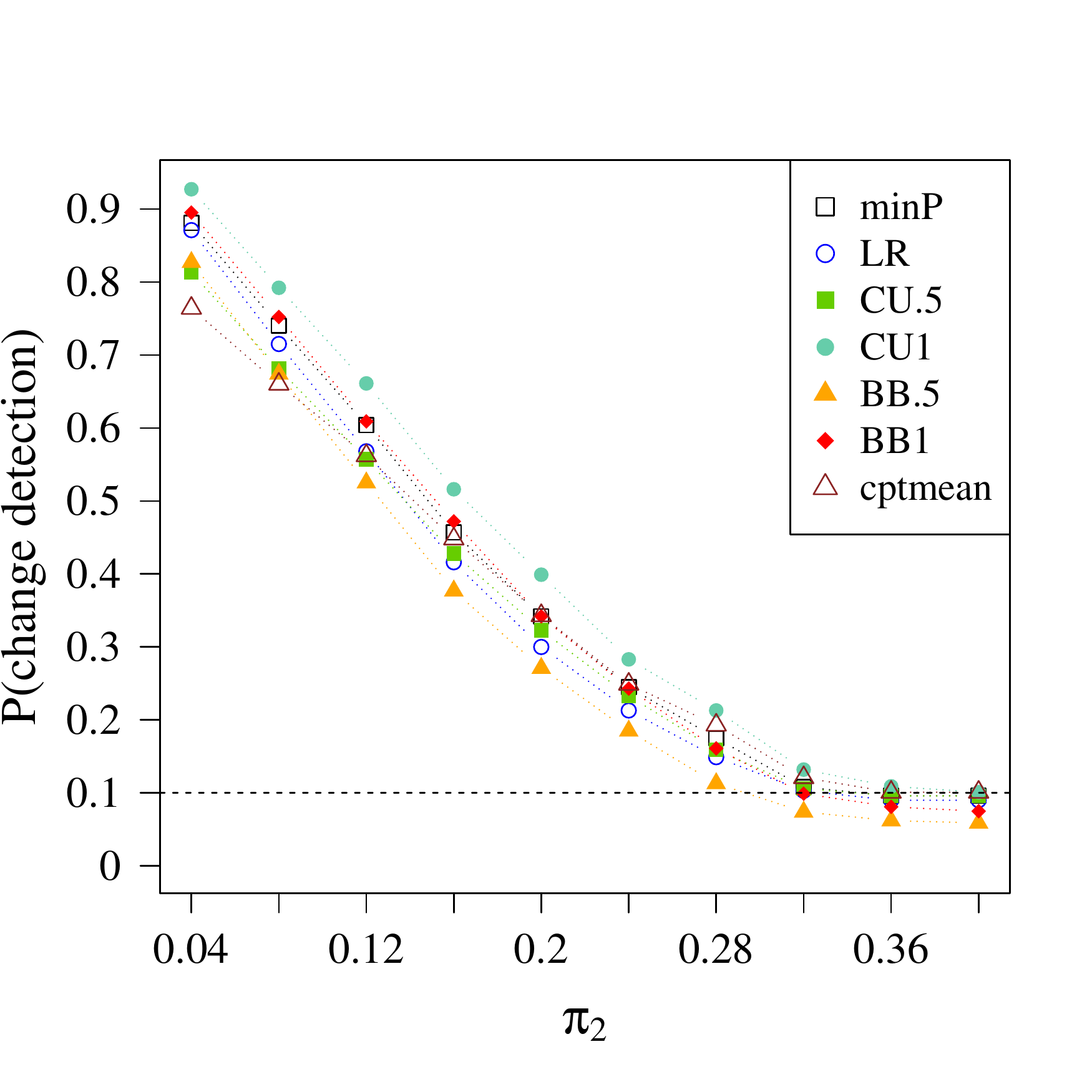} &
        \includegraphics[scale = 0.45]{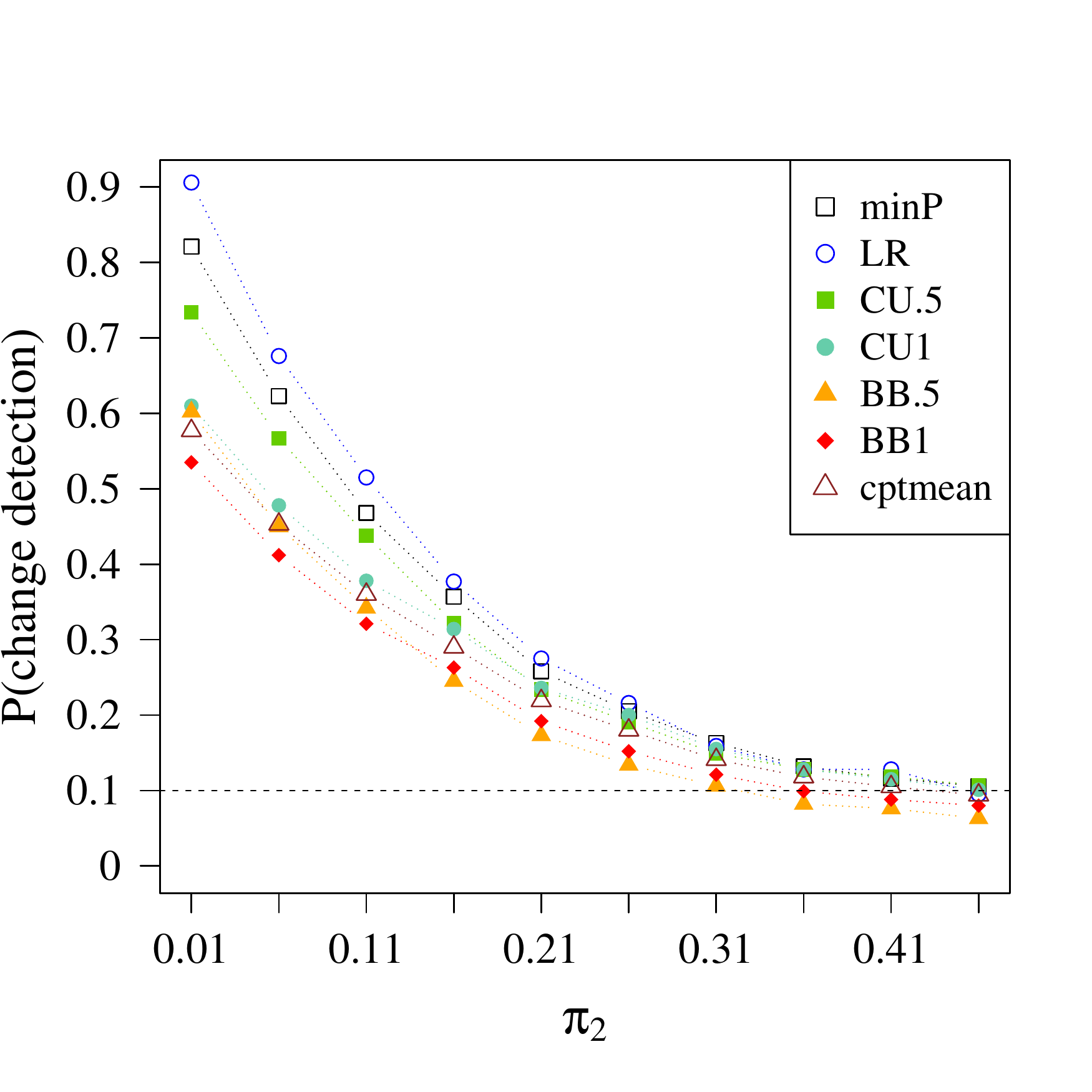} \\
        (a) $T=50, \tau=25, \pi_1 =0.4$ & (b) $T=50, \tau=40, \pi_1 =0.46$ \\
        \includegraphics[scale = 0.45]{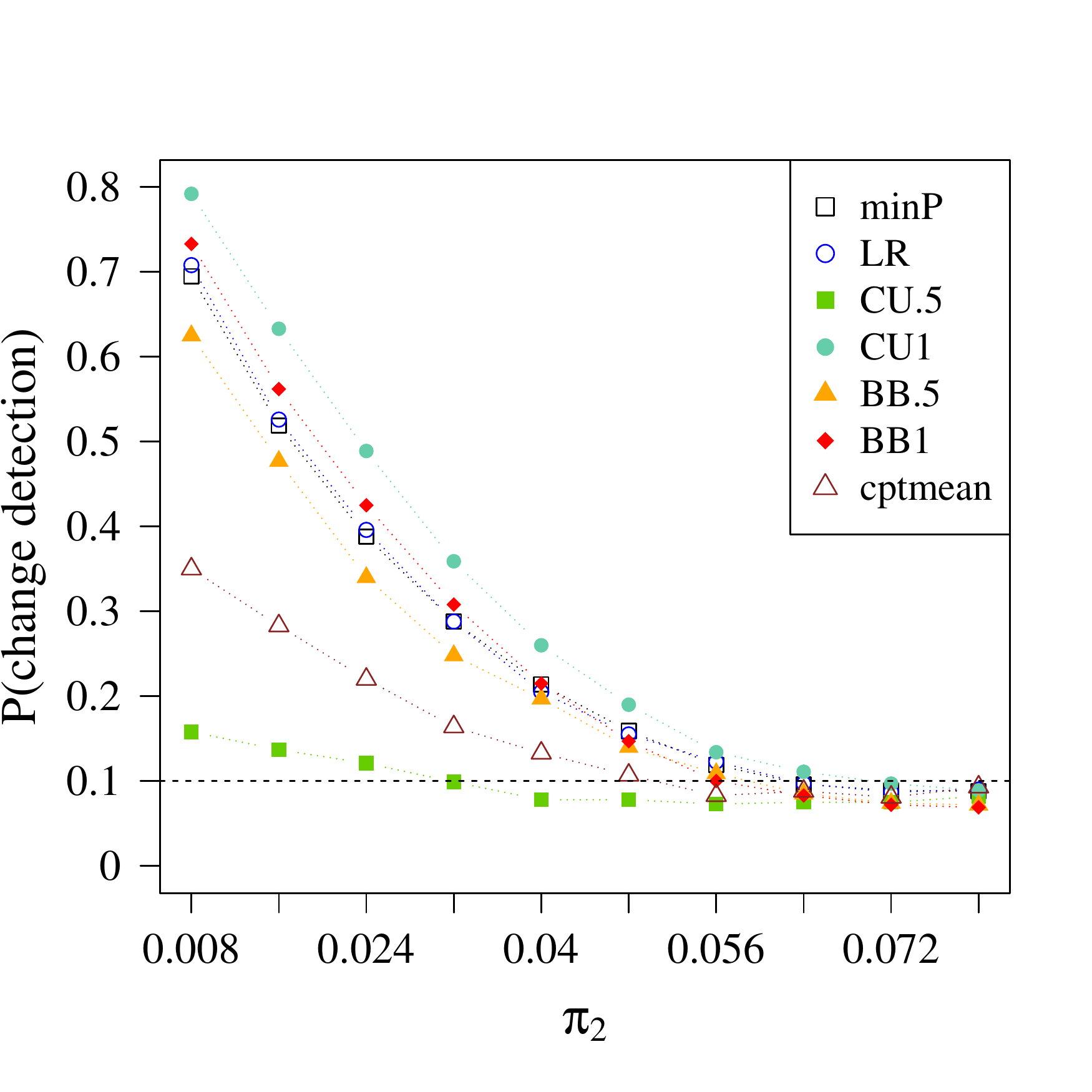} &
        \includegraphics[scale = 0.45]{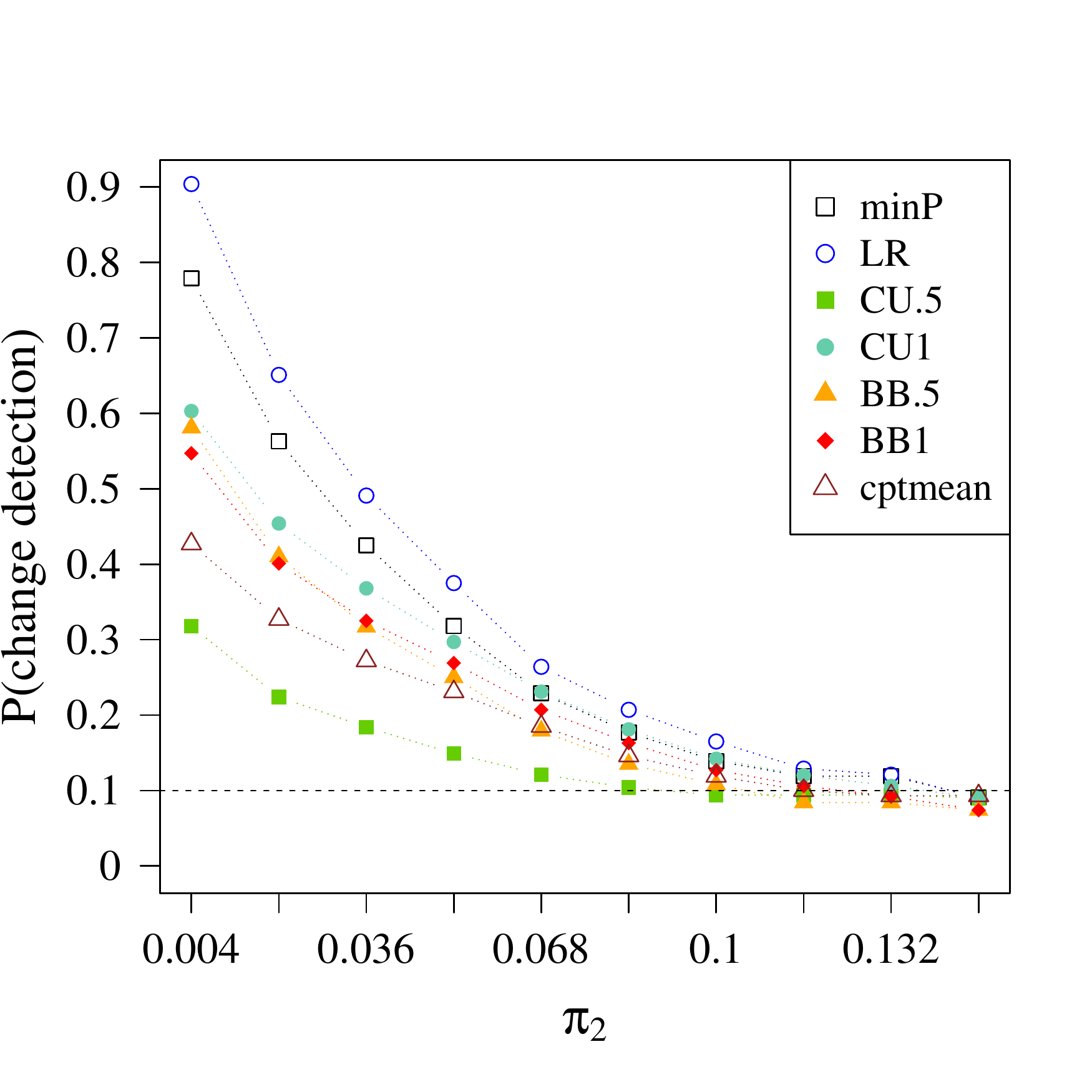} \\
        (c) $T=200, \tau=100, \pi_1 =0.08$ & (d) $T=200, \tau=160, \pi_1 =0.148$
    \end{tabular}
    \caption{Comparison of change detection probabilities of exact tests, asymptotic tests and the cptmean test with $\alpha=0.1$ in the time-series $X_1, \ldots, X_{\tau} \iid \bern(\pi_1)$, $X_{\tau+1}, \ldots, X_{T} \iid \bern(\pi_2)$.}
    \label{fig:bern.local.comp}
\end{figure}

\begin{figure}[!ht]
    \centering
    \begin{tabular}{cc}
        \includegraphics[scale = 0.45]{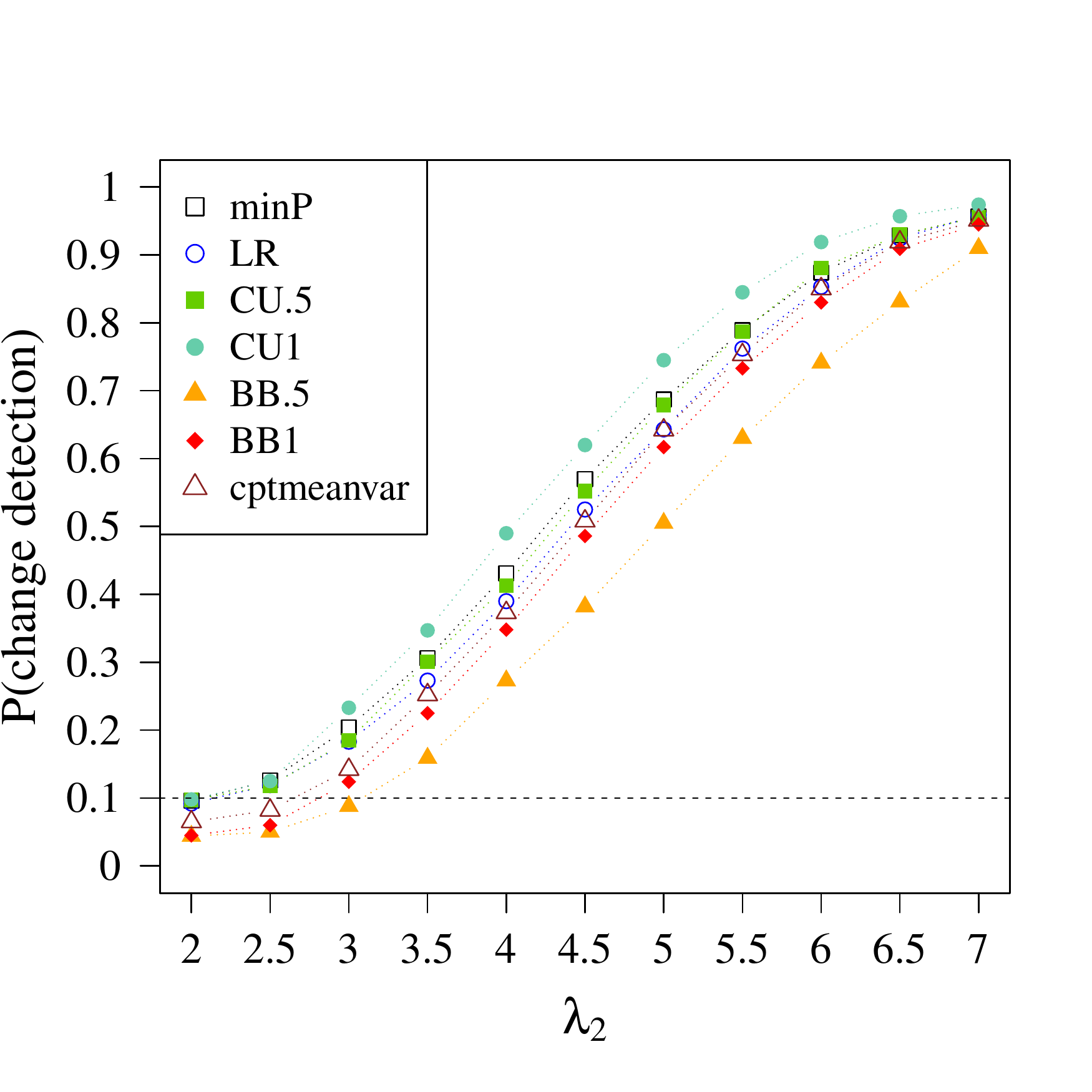} &
        \includegraphics[scale = 0.45]{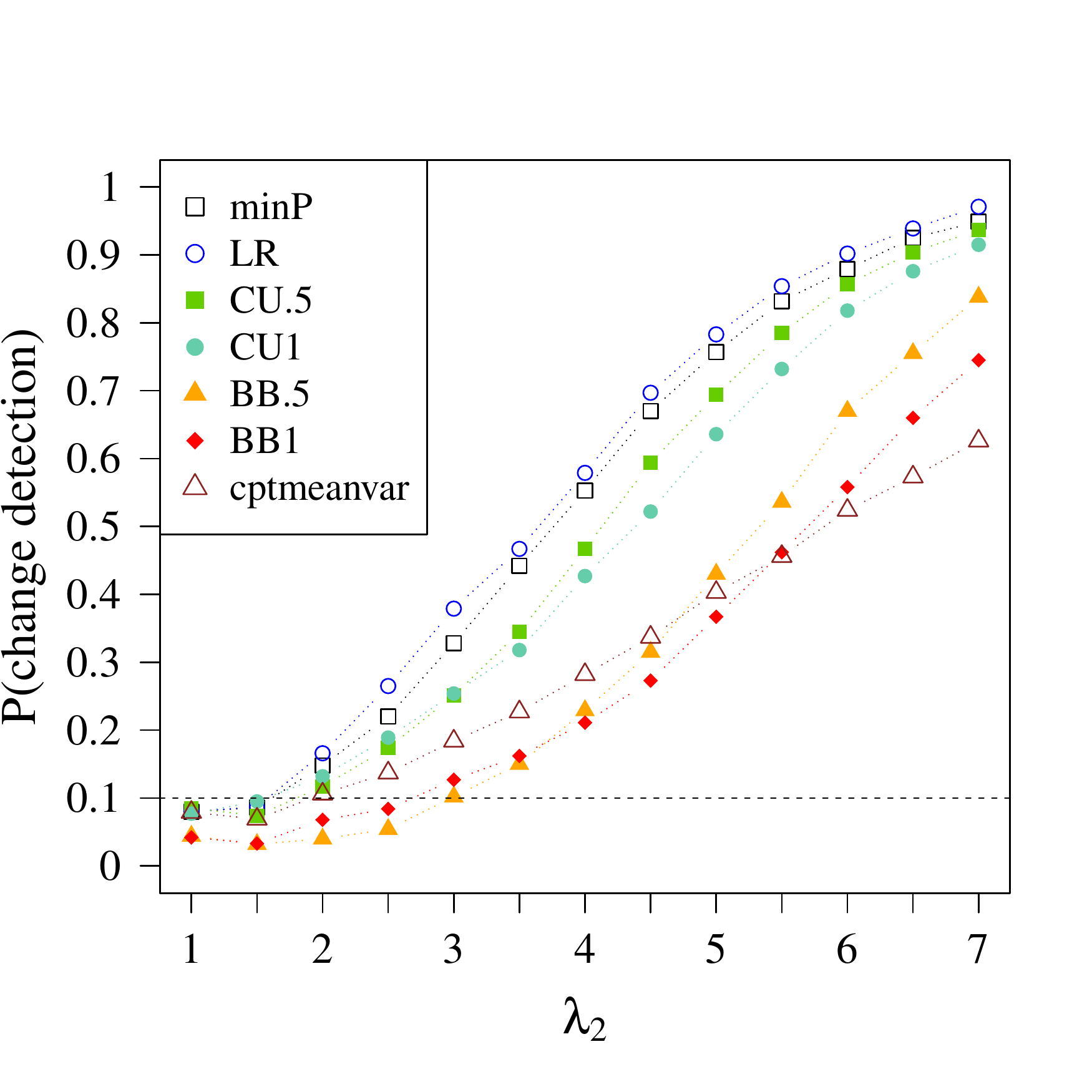} \\
        (a) $T=10, \tau=5, \lambda_1 =2$ & (b) $T=10, \tau=2, \lambda_1 =1$ \\
        \includegraphics[scale = 0.45]{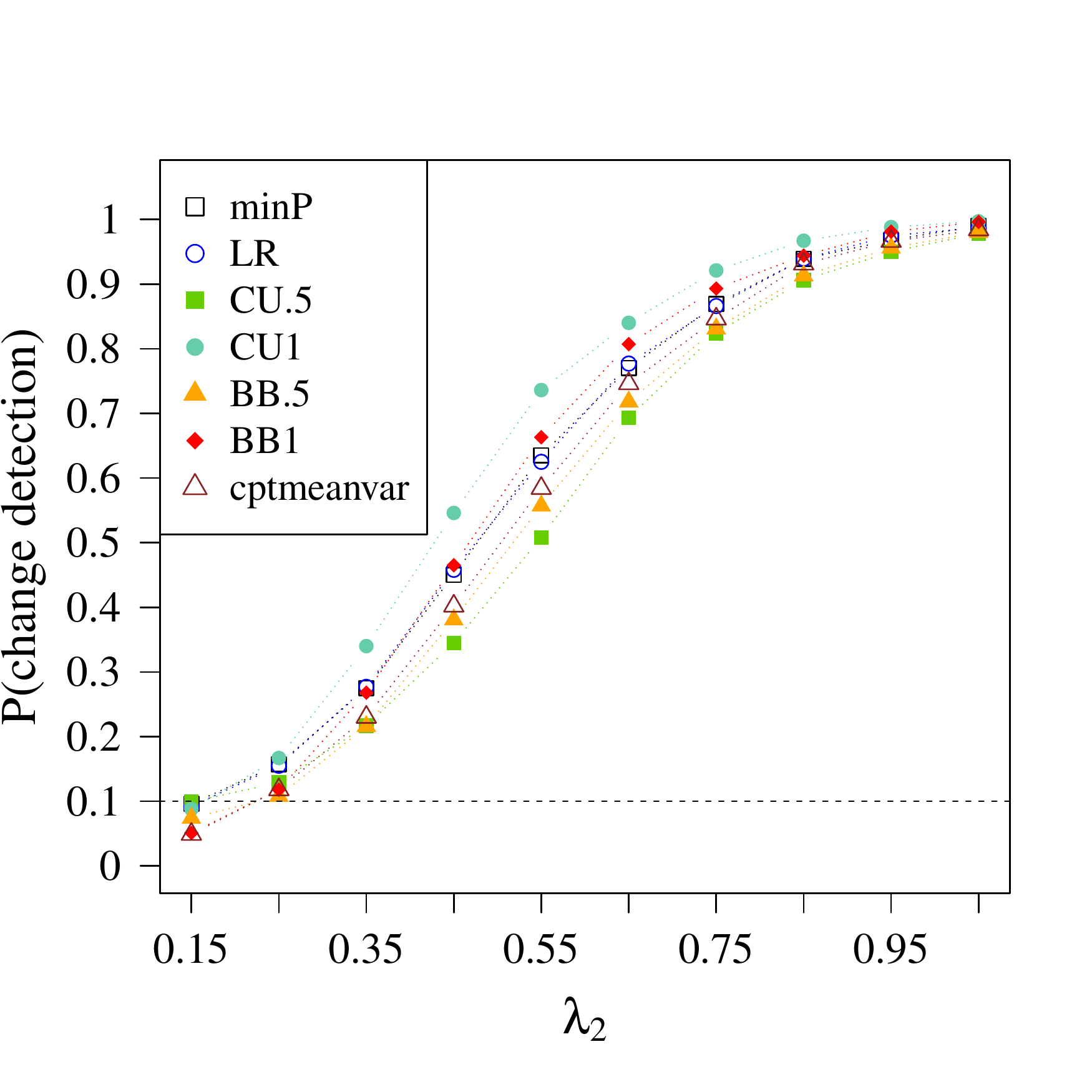} &
        \includegraphics[scale = 0.45]{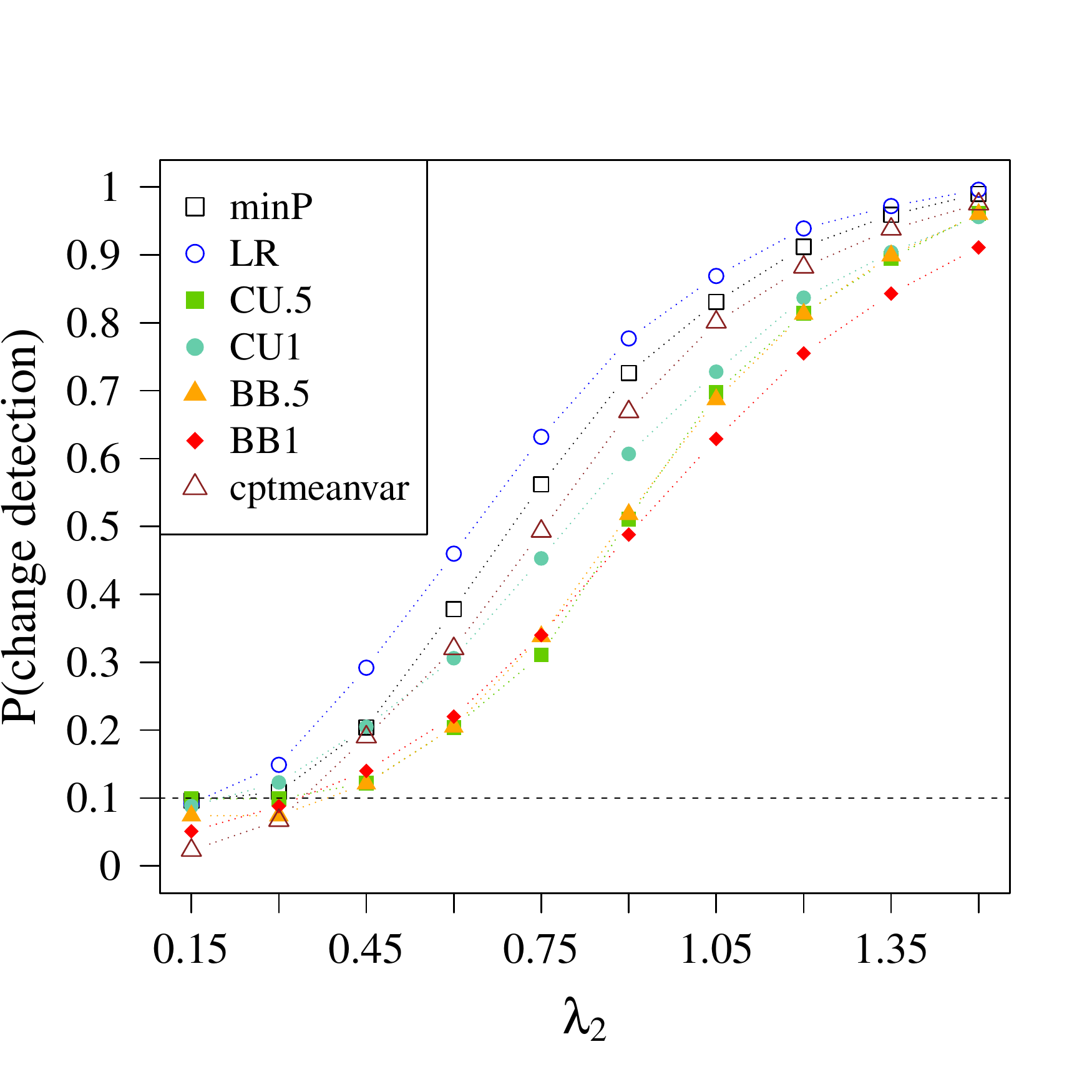} \\
        (c) $T=50, \tau=25, \lambda_1 =0.15$ & (d) $T=50, \tau=10, \lambda_1 =0.15$
    \end{tabular}
    \caption{Comparison of change detection probabilities of exact tests, asymptotic tests and the cptmeanvar test with $\alpha=0.1$ in the time-series: $X_1,\dots,X_{\tau} \iid \pois(\lambda_1)$, $X_{\tau+1},\dots,X_{T} \iid \pois(\lambda_2)$.}
    \label{fig:pois.local.comp}
\end{figure}

\subsection{Exact vs. asymptotic tests in a single channel}
\label{sec:exact-asymp}
We first compare the proposed exact level-$\alpha$ tests against the asymptotic level-$\alpha$ tests in a single channel. Exact conditional tests using the $\min_{1\le i \le T-1} p_i$ statistic, as discussed in Section \ref{sec:conditionaltests}, are referred to as ``minP'' tests, conditional tests based on $\T_{\lr}^{(b)}$ and $\T_{\lr}^{(c)}$ are referred to as ``LR'' tests, and  conditional tests based on the CUSUM statistics $\T_{\cusum}^{(0.5)}$ and $\T_{\cusum}^{(1)}$ are referred to as the ``CU.5'' test and the ``CU1'' test respectively. For these exact tests, $\alpha$-th quantiles of the respective test statistics under null are estimated from 50,000 Monte Carlo samples. 

Asymptotic tests based on Brownian bridge approximations (see Corollary~\ref{cor:BBapprox}) are considered for $\delta = 0.5$ and $\delta = 1$, and are referred to as the ``BB.5'' test and the ``BB1'' test respectively. 

Additionally, we consider two tests based on the functions \texttt{cpt.mean} and \texttt{cpt.meanvar} in the \textbf{R} package \texttt{changepoint}, which estimate the number of changepoints in univariate time-series. These are referred to as the ``cptmean'' test and the ''cptmeanvar'' test respectively. These tests\footnote{The ``cptmean'' test in Figure~\ref{fig:bern.local.comp} appiles the \texttt{cpt.mean} function with the ``BinSeg'' method and the CUSUM statistic. The ``cptmeanvar'' test in Figure~\ref{fig:pois.local.comp} applies the \texttt{cpt.meanvar} function with the ``BinSeg'' method and the ``Poisson'' statistic.} detect a change if the number of estimated changepoints is at least one.
 
Figure~\ref{fig:bern.local.comp} considers the Bernoulli case. We see that the exact conditional tests perform well in both sparse and dense situations and always outperform the cptmean test. The asymptotic tests (especially BB1) also provide reasonable power if the sample size $T$ is large and the changepoint $\tau$ is near the middle (Figures~\ref{fig:bern.local.comp}(a) and \ref{fig:bern.local.comp}(c)). However, if the changepoint is closer to the boundary (Figures~\ref{fig:bern.local.comp}(b) and \ref{fig:bern.local.comp}(d)), then the exact conditional tests minP and LR perform significantly better than the asymptotic tests.

Figure~\ref{fig:pois.local.comp} considers the Poisson case. The proposed exact tests perform well even when the sample size is as small as $T=10$, and, in this case, they uniformly outperform the asymptotic tests and the cptmeanvar test. If the changepoint is close to the boundary, then the exact tests (especially minP and LR) yield much higher power than their competitors (see Figures~\ref{fig:pois.local.comp}(b) and \ref{fig:pois.local.comp}(d)). For large sample sizes (e.g., $T=50$), when the Brownian bridge approximations kick in, asymptotic tests become comparable to the exact tests in terms of performance.

\subsection{Global vs. local testing in multiple channels}
\label{sec:glob-loc}
Global testing of $H_0$ is done by permutation tests using $C^{(\delta)}$ with Euclidean norm as discussed in Section~\ref{sec:mult}. The $m$-variate time-series $\boldsymbol{X}_1,\dots, \boldsymbol{X}_T$ is permuted $B=1000$ times to obtain a randomized size-$\alpha$ test. For power comparisons, two tests ``gCU.5'' and ``gCU1'' are considered that are obtained using the global CUSUM statistics $C^{(0.5)}$ and $C^{(1)}$ respectively.

To test each channel for possible changepoints, we consider three exact conditional tests, namely minP, LR and CU1. After computing $p$-values from these tests, we employ the BH procedure to obtain $\mathcal{R} = \{1\le j \le m: H_{0, j} \text{ is rejected} \}$. Henceforth, we refer to these local tests as minP-BH, LR-BH and CU1-BH respectively. Figures~\ref{fig:berpowglobloc} and \ref{fig:poispowglobloc} compare probabilities of global change detection (gCD), i.e. probabilities of rejecting $H_0$ (this is $P(\mathcal{R} \neq \emptyset)$ for local tests) for global and local tests. Figure~\ref{fig:berpowglobloc} considers Bernoulli channels while Figure~\ref{fig:poispowglobloc} deals with Poisson channels. We find that the local tests are significantly more powerful than the global tests in the rare signal regime where $n_{\text{cp}}$ is small or moderate. Also, the power advantage is more and continues over a longer range of $n_{\mathrm{cp}}$ when the changepoint is near the boundary. The local and global tests have comparable power for large $n_{\text{cp}}$, as expected. 

\begin{figure}[!ht]
    \centering
    \begin{tabular}{cc}
        \includegraphics[scale = 0.45]{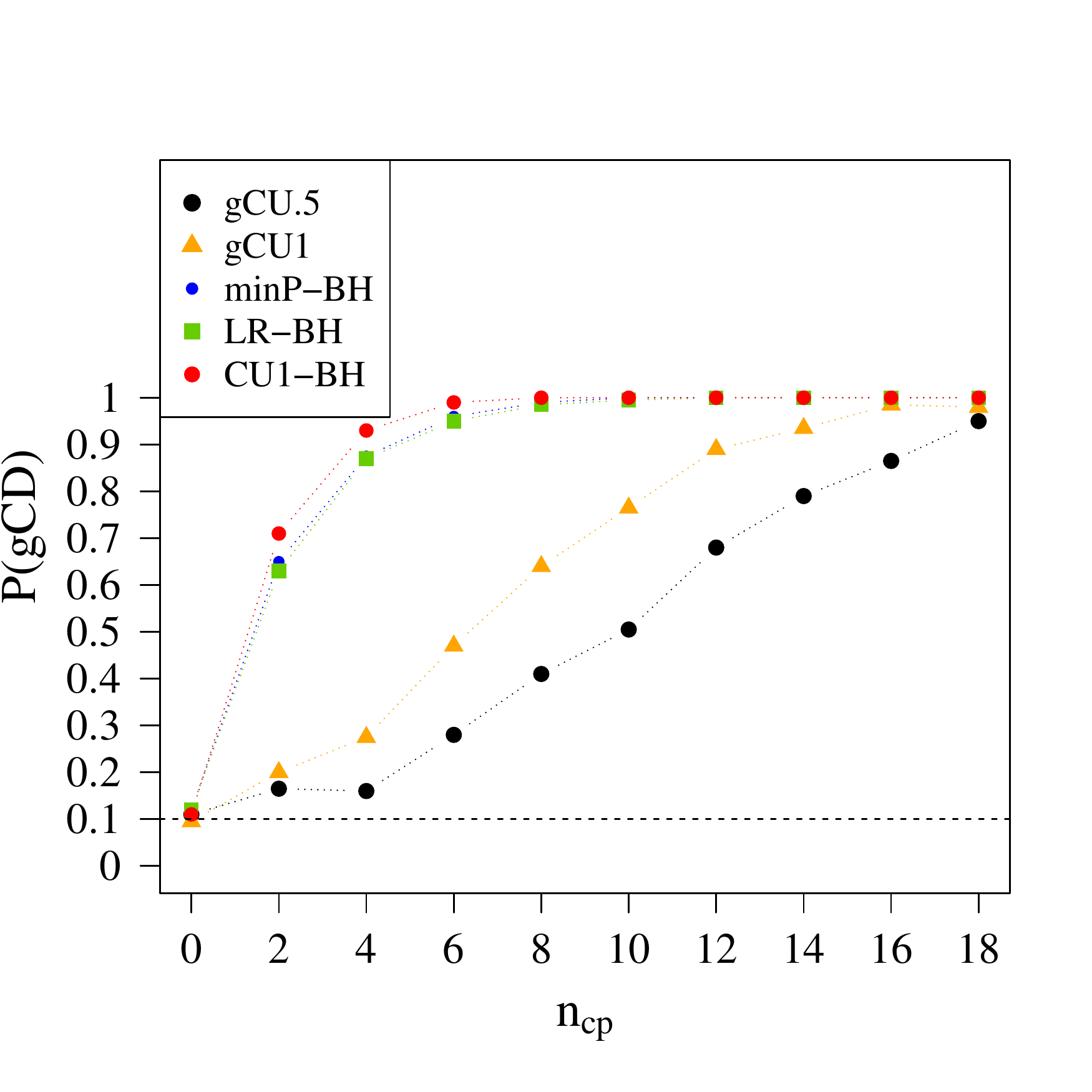} &
        \includegraphics[scale = 0.45]{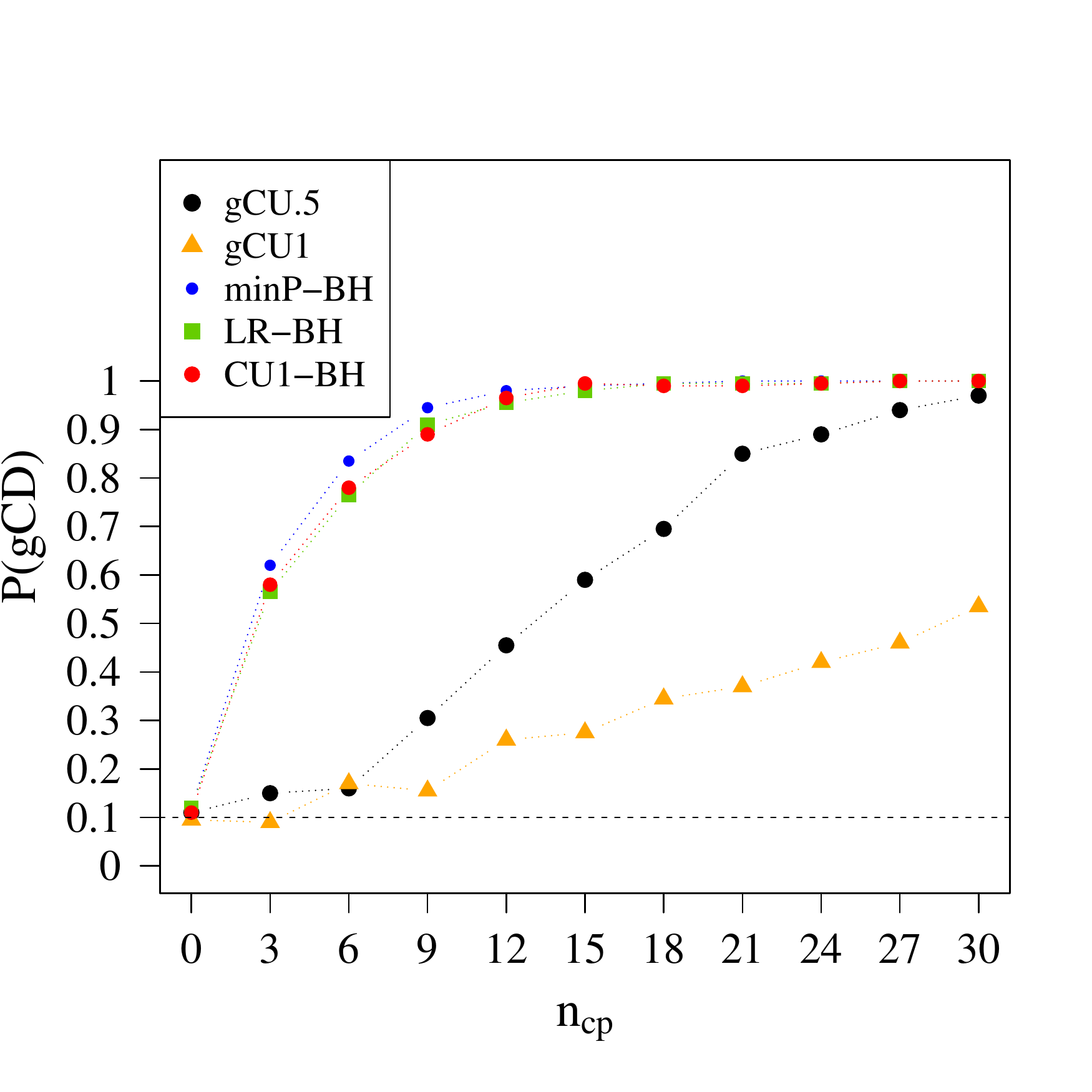} \\
        (a) $\tau=100$ & (b) $\tau=160$
    \end{tabular}
    \caption{Comparison of P(gCD) of global and local tests in $m=1000$ independent Bernoulli series: $X_{j, 1}, \ldots, X_{j, \tau} \iid \bern(\pi_1 = 0.05)$,\, $X_{j, \tau+1}, \ldots, X_{j, T} \iid \bern(\pi_2 = 0.25)$ where $T=200$. Tests are conducted at level $\alpha=0.1$ and $n_{\text{cp}}$ channels undergo change at time-point $\tau$.}
    \label{fig:berpowglobloc}
\end{figure}

\begin{figure}[!ht]
    \centering
    \begin{tabular}{cc}
        \includegraphics[scale = 0.45]{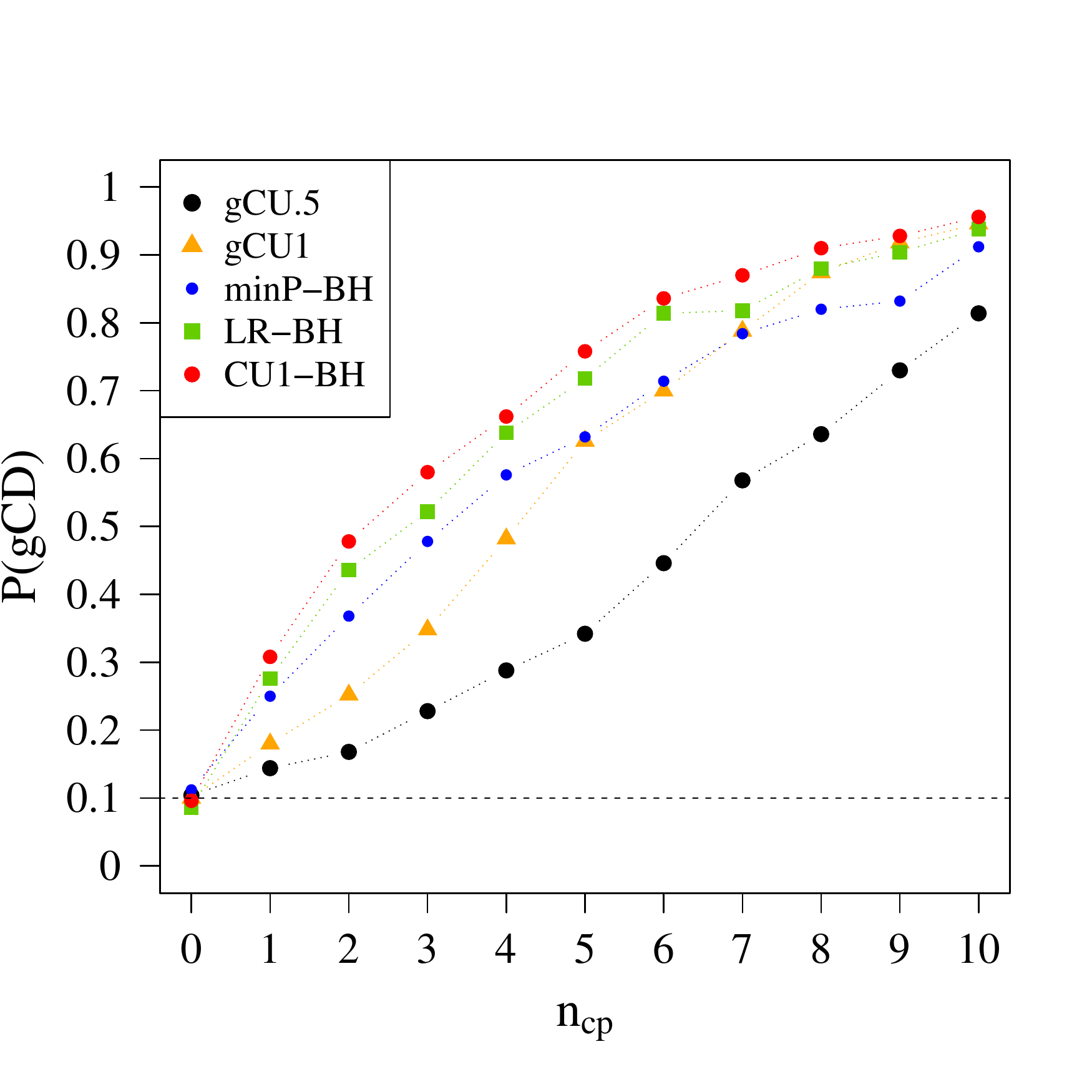} &
        \includegraphics[scale = 0.45]{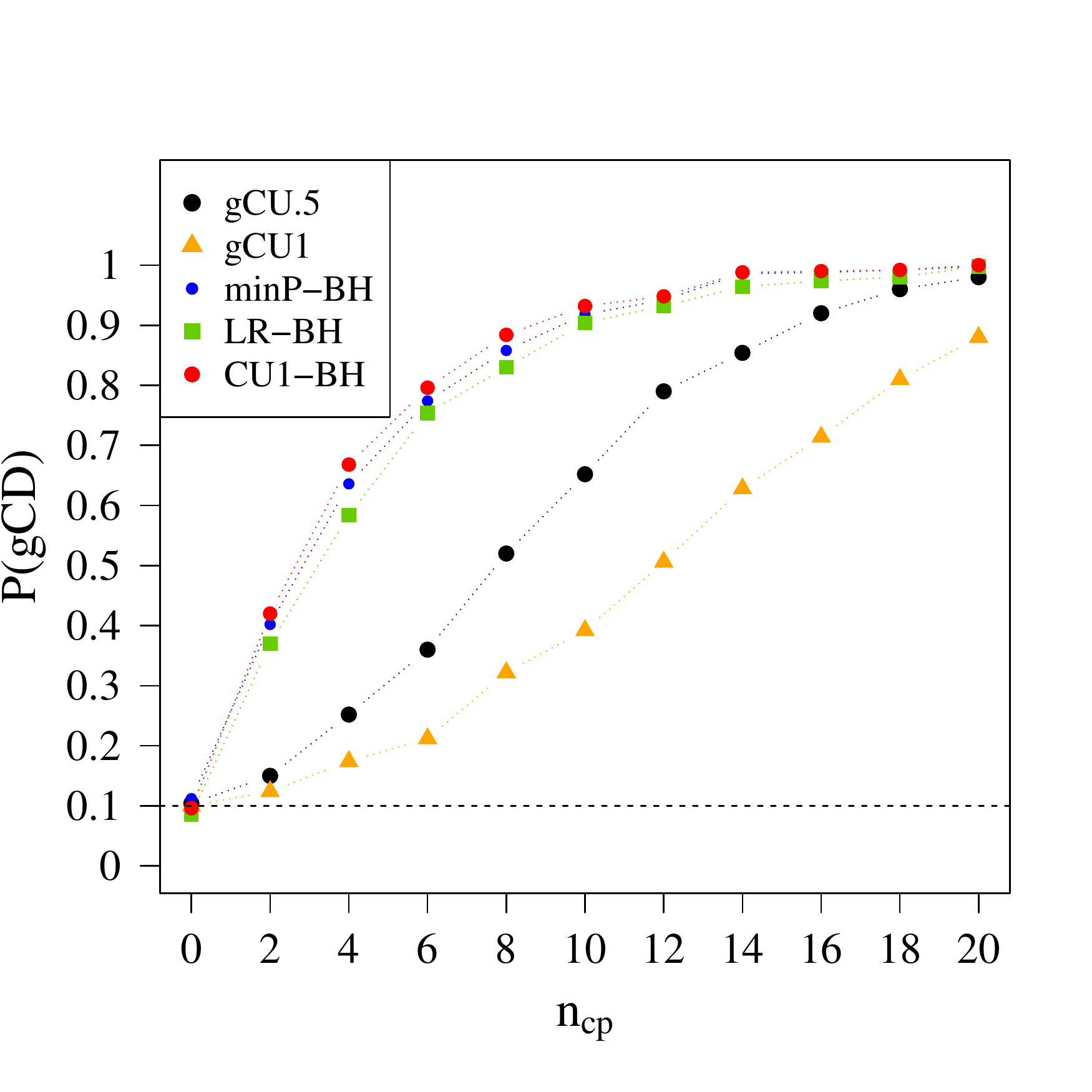} \\
       (a) $\tau=10$ & (b) $\tau=15$
    \end{tabular}
    \caption{Comparison of P(gCD) of global and local tests in $m=200$ independent Poisson series: $X_{j, 1}, \ldots, X_{j, \tau} \iid \pois(\lambda_1=0.25)$,\, $X_{j, \tau+1}, \ldots, X_{j, T} \iid \pois(\lambda_2=1.5)$ where $T=20$. Tests are conducted at level $\alpha=0.1$ and $n_{\text{cp}}$ channels undergo change at time-point $\tau$.}
    \label{fig:poispowglobloc}
\end{figure}

In a simulation study presented in the appendix, we consider two additional FDR controlling methods, namely the adaptive Benjamini-Hocheberg (ABH) and the adaptive Storey-Taylor-Siegmund (STS) methods. Performances of these methods are comparable to that of the vanilla BH method as the simulation study is done under the rare signal regime. Both the ABH and the STS methods are implemented using the \textbf{R} package \texttt{mutoss} \citep{mutoss17}.

\section{Real data}
\label{sec:realdata}
Now we analyse two datasets that can be naturally summarised by networks. These give real examples of multichannel binary and count data with potential changepoints.

\subsection{US senate rollcall data}
\label{sec:USvote}
From the US senate rollcall dataset \citep{voteview2020}, we construct networks where nodes represent US senate seats. Each epoch represents a proposed bill on which votes were taken. An edge between two seats is formed if they voted similarly on that bill. We have $n=100$ nodes. We consider $T = 50$ time-points between August 10, 1994 and January 24, 1995.

There are $m = \binom{n}{2} = 4950$ channels (i.e. edges). Of these, $622$ channels are ignored while analyzing this data since those channels contain too many zeros or ones (more than 45). We applied the BH procedure to simultaneously test the remaining $4328$ channels controlling FDR at level $\alpha = 0.05$. For each significant channel, the corresponding changepoint location is also reported (see the discussion in Section~\ref{sec:estimation}).

In Figure~\ref{fig:uss_hist}, we plot the histograms of the changepoint locations of the significant channels. Note particularly the peak near time-point $24$ (which corresponds to December 1, 1994). There is a historically well-documented change near December 1994, which saw the end of the conservative coalition (see, e.g., \cite{moody2013portrait}). Interestingly, the global method also detected a changepoint at $t = 24$. Changepoints at nearly the same location were found earlier in \cite{roy2017change} and \cite{mukherjee2018thesis}. However, the local methods have the advantage of identifying the channels that underwent a change. The number of significant channels, $n_{s}$, is reported below each histogram. A number of channels had extremely small $p$-values. For example, Figure~\ref{fig:uss_twocha}(a) depicts the time-series of edges $(4, 5)$ and $(4, 6)$. Changes are visible to the naked eye.  Seats $5$ and $6$ are in Arizona, while seat 4 is in Arkansas. Clearly, seat 4 went from agreeing with seats 5 and 6 to disagreeing. On the other hand, seat 3 is also from Arkansas, and no changepoints were found in the channels $(3, 5)$ and $(3, 6)$ (see Figure~\ref{fig:uss_twocha}(b)).

\begin{figure}[!ht]
    \centering
    \begin{tabular}{ccc}
        \includegraphics[scale = 0.28]{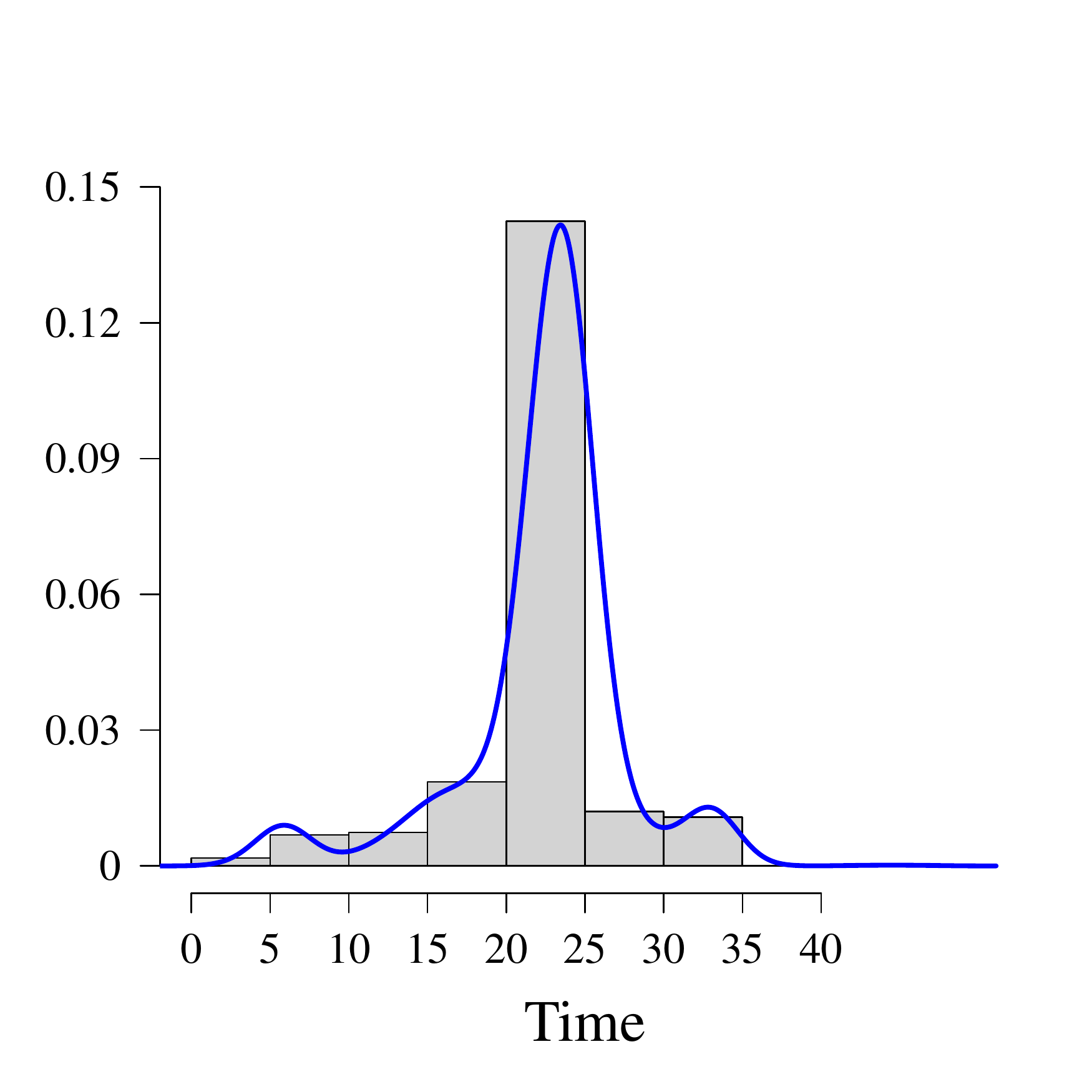} &
        \includegraphics[scale = 0.28]{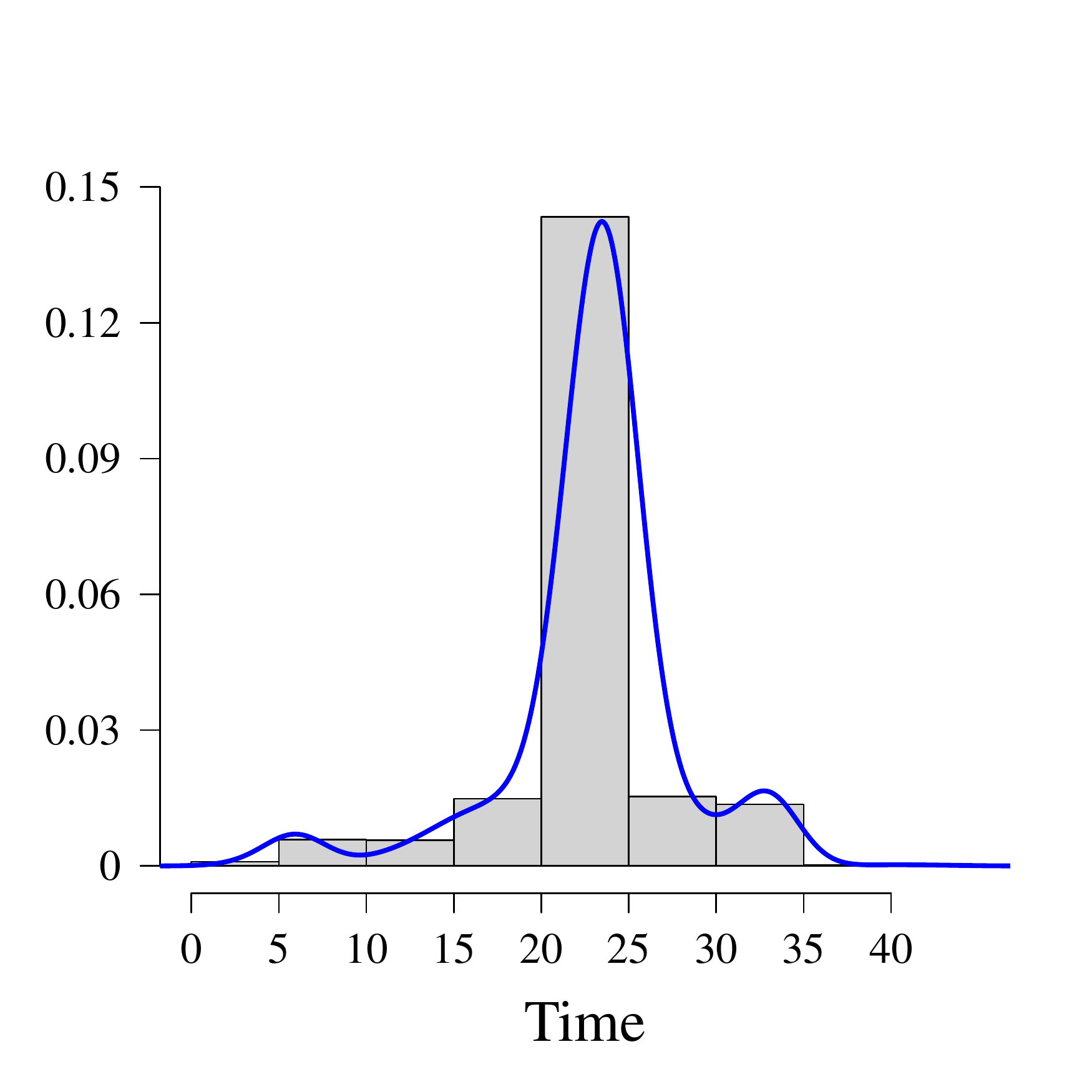} &
        \includegraphics[scale = 0.28]{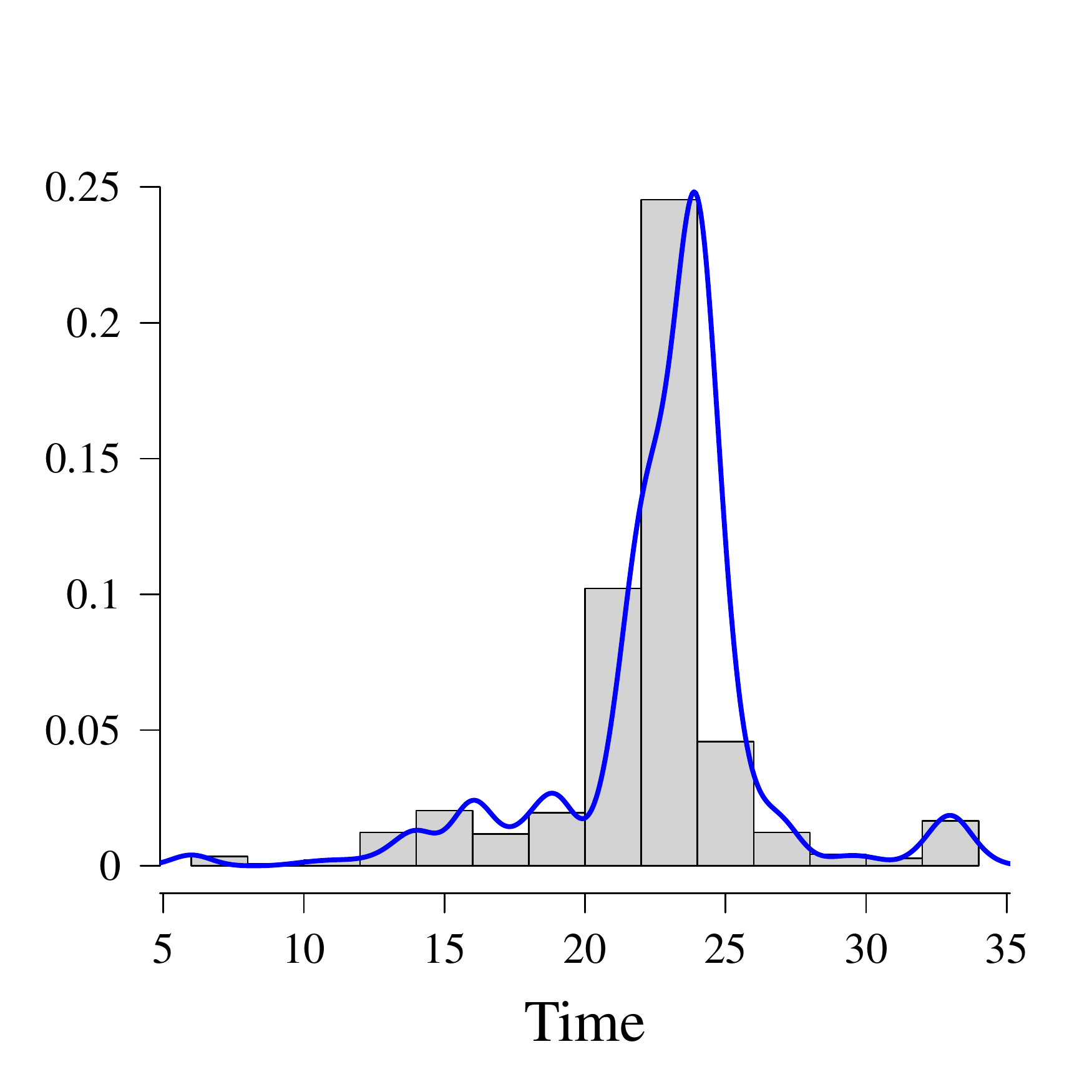} \\
        (a) minP-BH, $n_{s}$ = 1950 & (b) LR-BH, $n_{s}$ = 1980 & (c) CU1-BH, $n_{s}$ = 1987
    \end{tabular}
    \caption{Distribution of detected changepoint locations in the US senate rollcall data. All three methods report a mode at $t = 24$.}
    \label{fig:uss_hist}
\end{figure}
\begin{figure}[!ht]
    \centering
    \begin{tabular}{cc}
        \includegraphics[scale = 0.45]{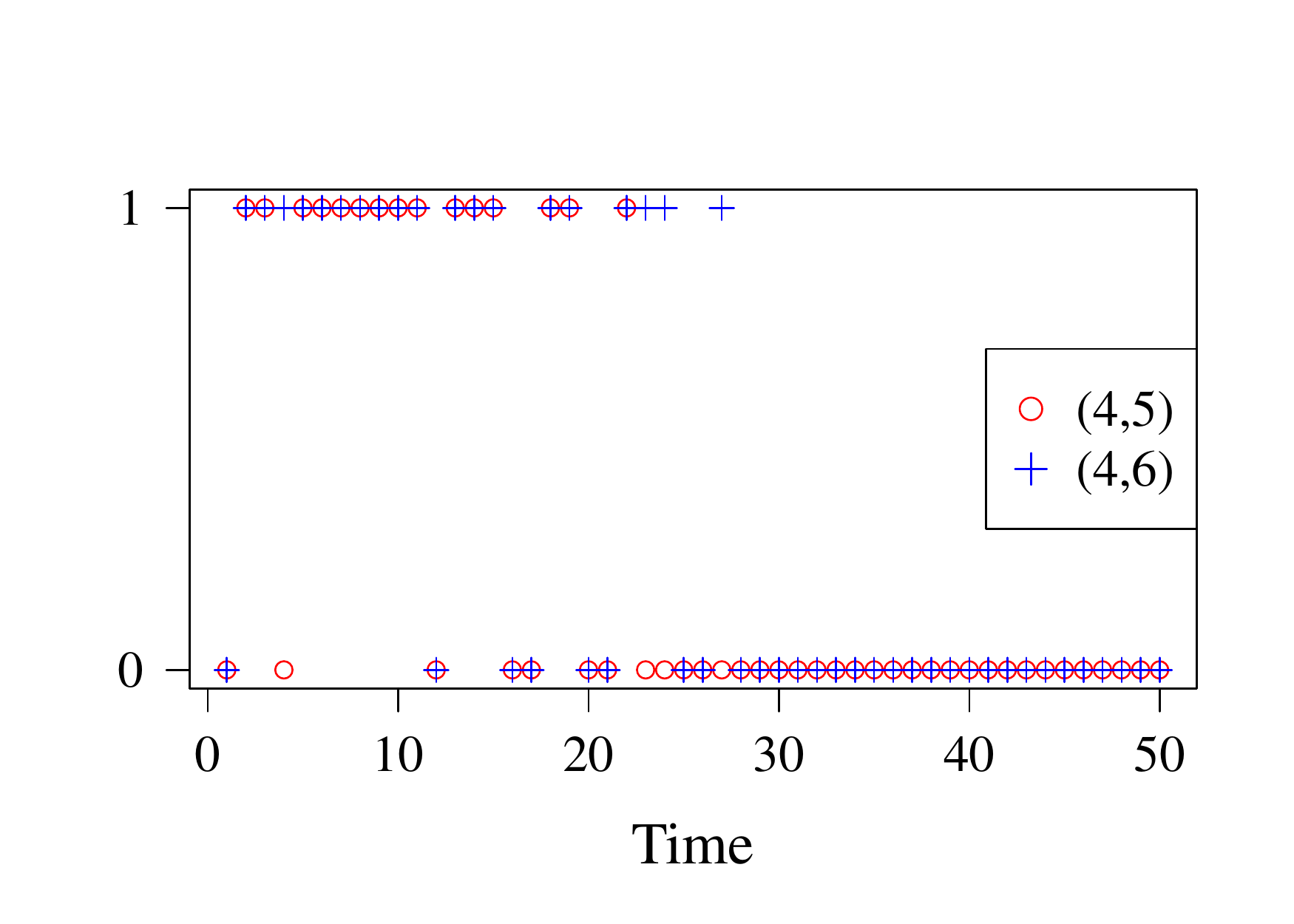} &
        \includegraphics[scale = 0.45]{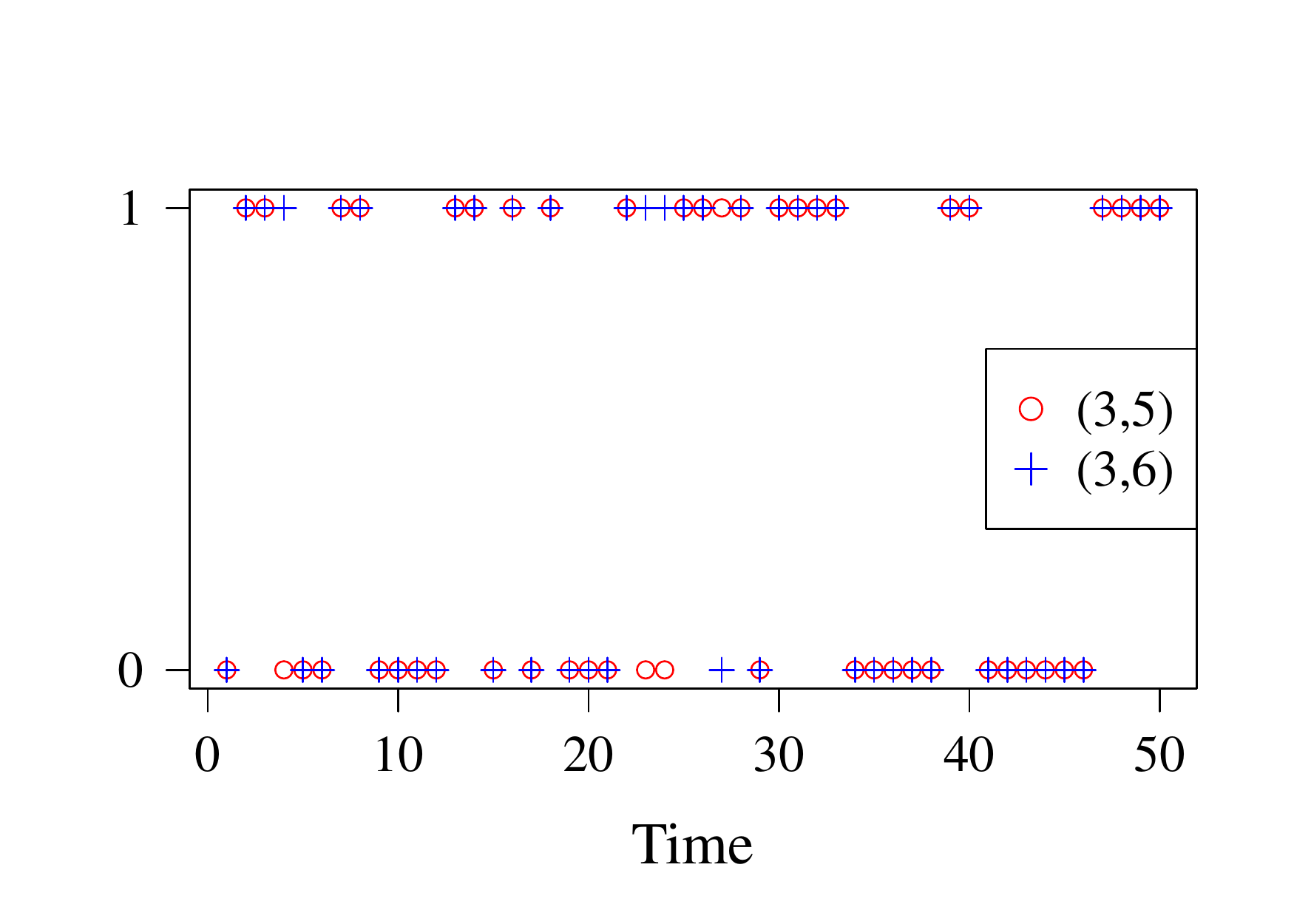} \\
    (a) & (b) 
    \end{tabular}
    \caption{(a) Edges $(4, 5)$ and $(4, 6)$. Seats $5$ and $6$ are in Arizona, while seat 4 is in Arkansas. Clearly, seat 4 went from agreeing with seats 5 and 6 to disagreeing. (b) Edges $(3, 5)$ ans $(3, 6)$. Seat $3$ is also in Arkansas. No changepoints are detected in these channels.}%
    \label{fig:uss_twocha}
\end{figure}

\subsection{MIT reality mining data}
We use the MIT reality mining data \citep{eagle2006reality} to construct a series of networks involving $n = 90$ individuals (staff and students at the university). The data consists of call logs between these individuals from 20th July 2004 to 14th June 2005. We construct $T = 48$ weekly networks, where a weighted edge between nodes $u$ and $v$ reports the number of phone calls between them during the corresponding week. There are $m = \binom{n}{2} = 4005$ channels (i.e. edges). 3945 channels are ignored while analyzing this data since those channels contain too many zeros (more than 44). The remaining 60 channels are tested for possible changepoints.

We model the weighted edges as Poisson variables and apply the exact tests minP, LR or CU1 on each channel. Then we apply the BH method to simultaneously test the 60 channels controlling FDR at level $\alpha = 0.05$. Figure~\ref{fig:mit_hist} contains the histograms of the detected changepoint locations. Note particularly the peaks near $t = 20$ and $t = 33$. For comparison, a changepoint at $t = 24$ was found by a global algorithm in \cite{mukherjee2018thesis}. The graph-based multivariate (global) change detection methods of \cite{chen2015} found changepoints at approximately $t = 22$ and $t = 25$ (their analyses were on daily networks). The global algorithms consider global characteristics, and thus it is not surprising that they find changepoints somewhat in the middle of the predominant local changepoints near $t = 20$ and $t = 33$. It turns out that $t = 20$ is just before the start of the Winter break, and $t = 33$ is just before the start of the Spring break.

We also perform changepoint analysis with node-degrees as channels, modeled as a Poisson series. Figure~\ref{fig:mitdeg_hist} shows the histograms of the detected changepoint locations. Analyses of edge and degree time-series detect 44 and 46 common nodes (i.e. detected by all three tests: minP-BH, LR-BH and CU1-BH) respectively. Among these, $40$ nodes are declared significant by both analyses.

Finally, in Figure~\ref{fig:mit_graph}, we show the average networks before and after time-point 20. The 46 channels (i.e. nodes) declared to have changepoints by all three tests under the degree-based analysis are shown as green circles. A structural change is clearly visible.

\begin{figure}[!ht]
    \centering
    \begin{tabular}{ccc}
        \includegraphics[scale = 0.28]{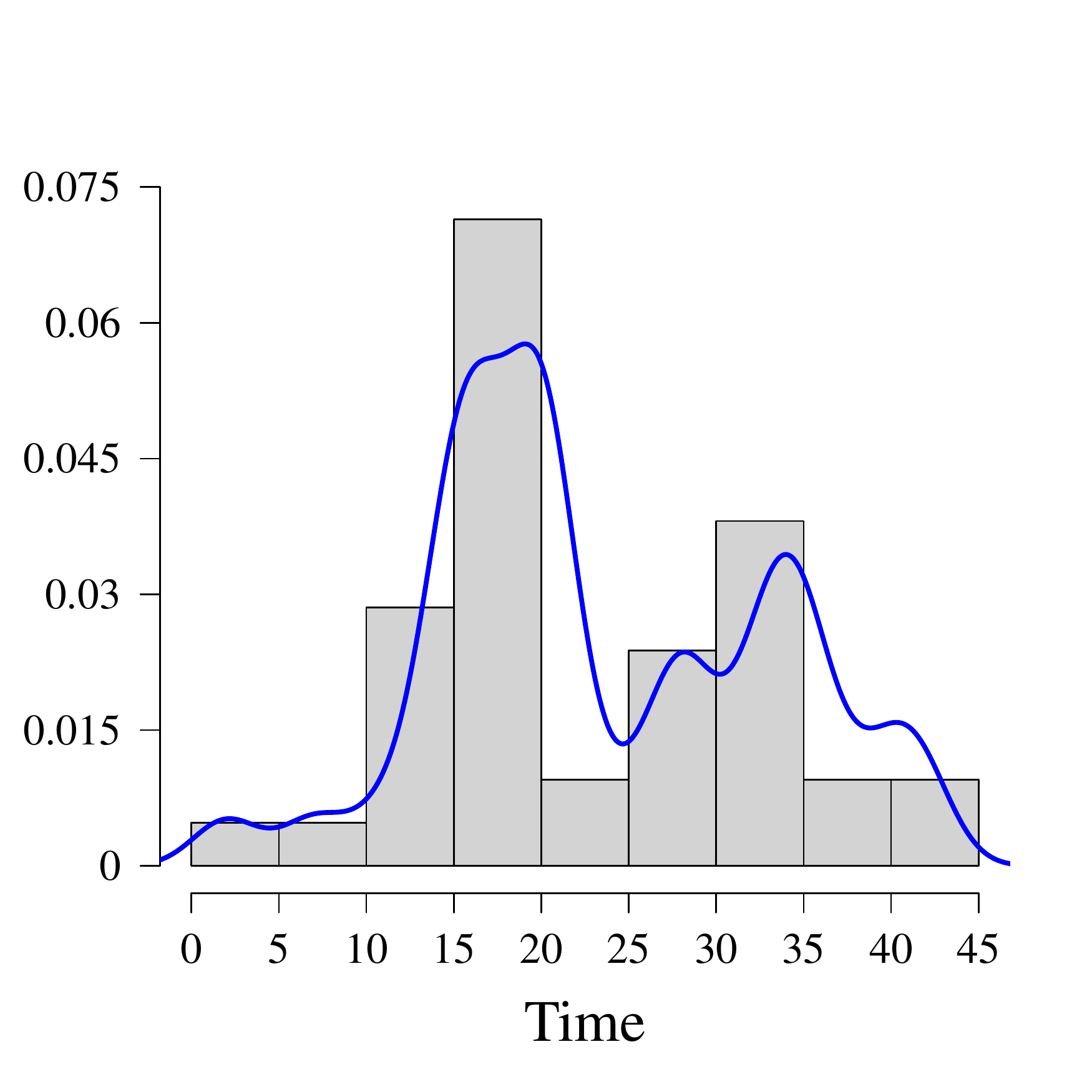} &
        \includegraphics[scale = 0.28]{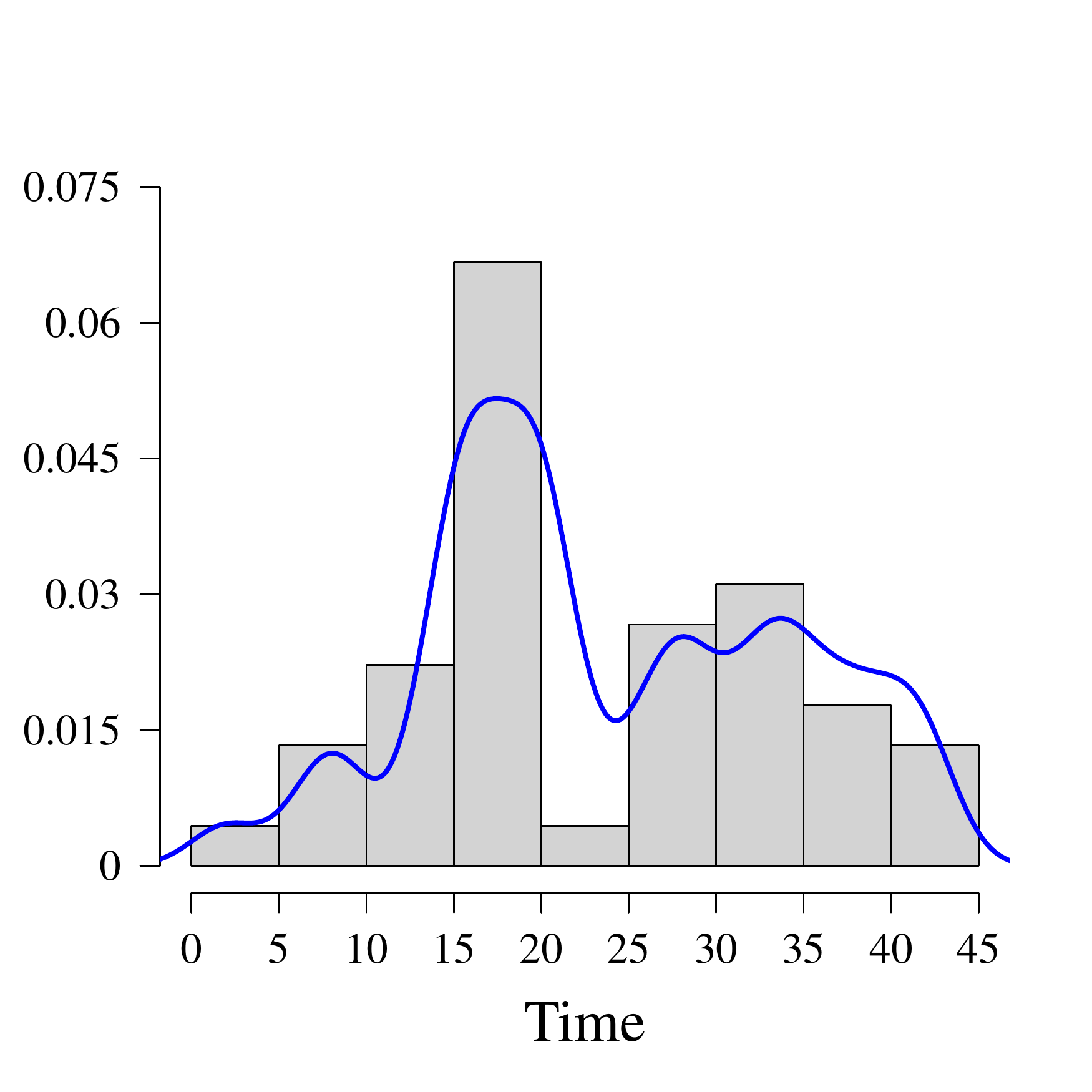} &
        \includegraphics[scale = 0.28]{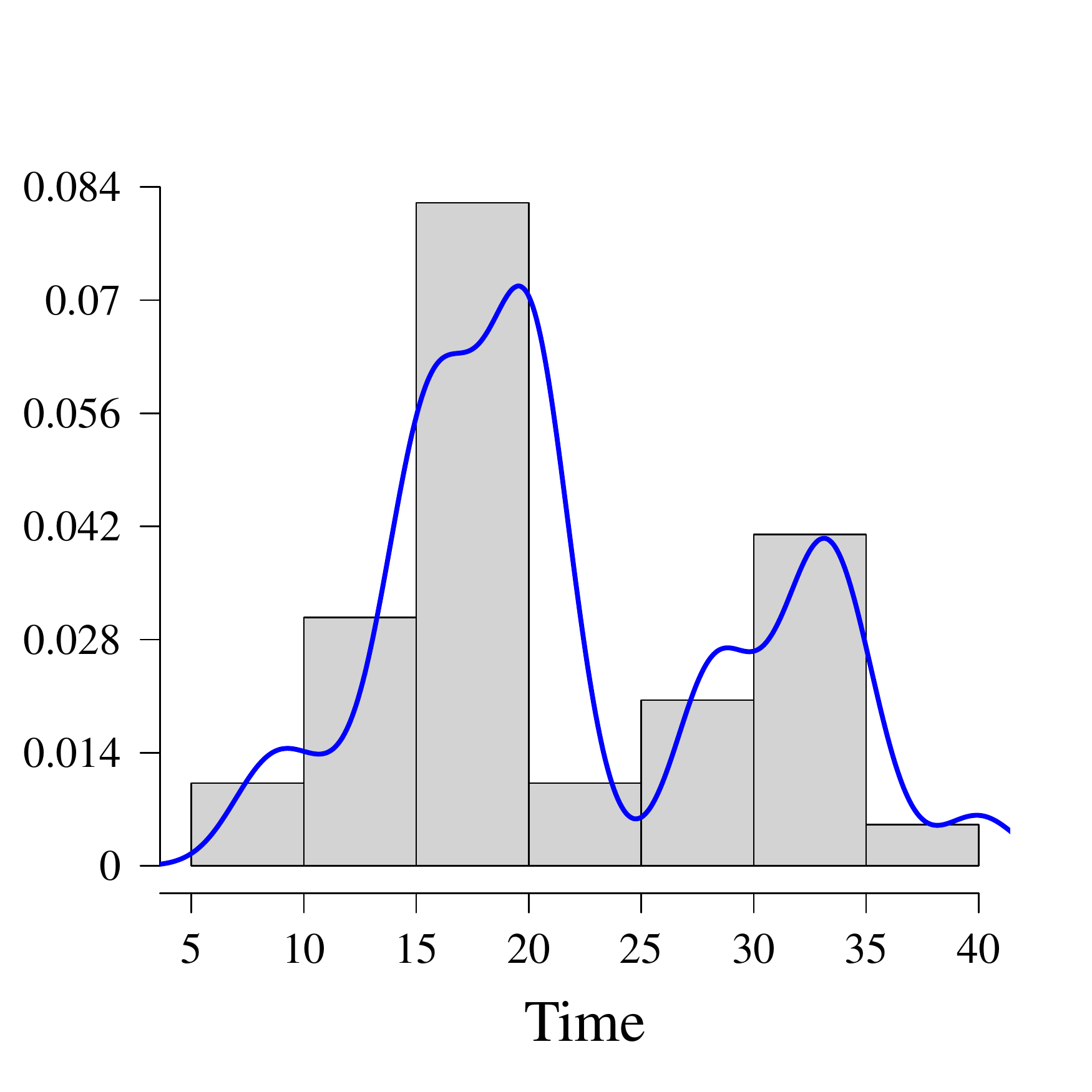} \\
        (a) minP-BH, $n_{s}=42$ & (b) LR-BH, $n_{s}=45$ & (c) CU1-BH, $n_{s}=39$
    \end{tabular}
    \caption{Changepoint locations from edge-based analysis of the MIT reality mining data.}
    \label{fig:mit_hist}
\end{figure}
\begin{figure}[!ht]
    \centering
    \begin{tabular}{ccc}
        \includegraphics[scale = 0.28]{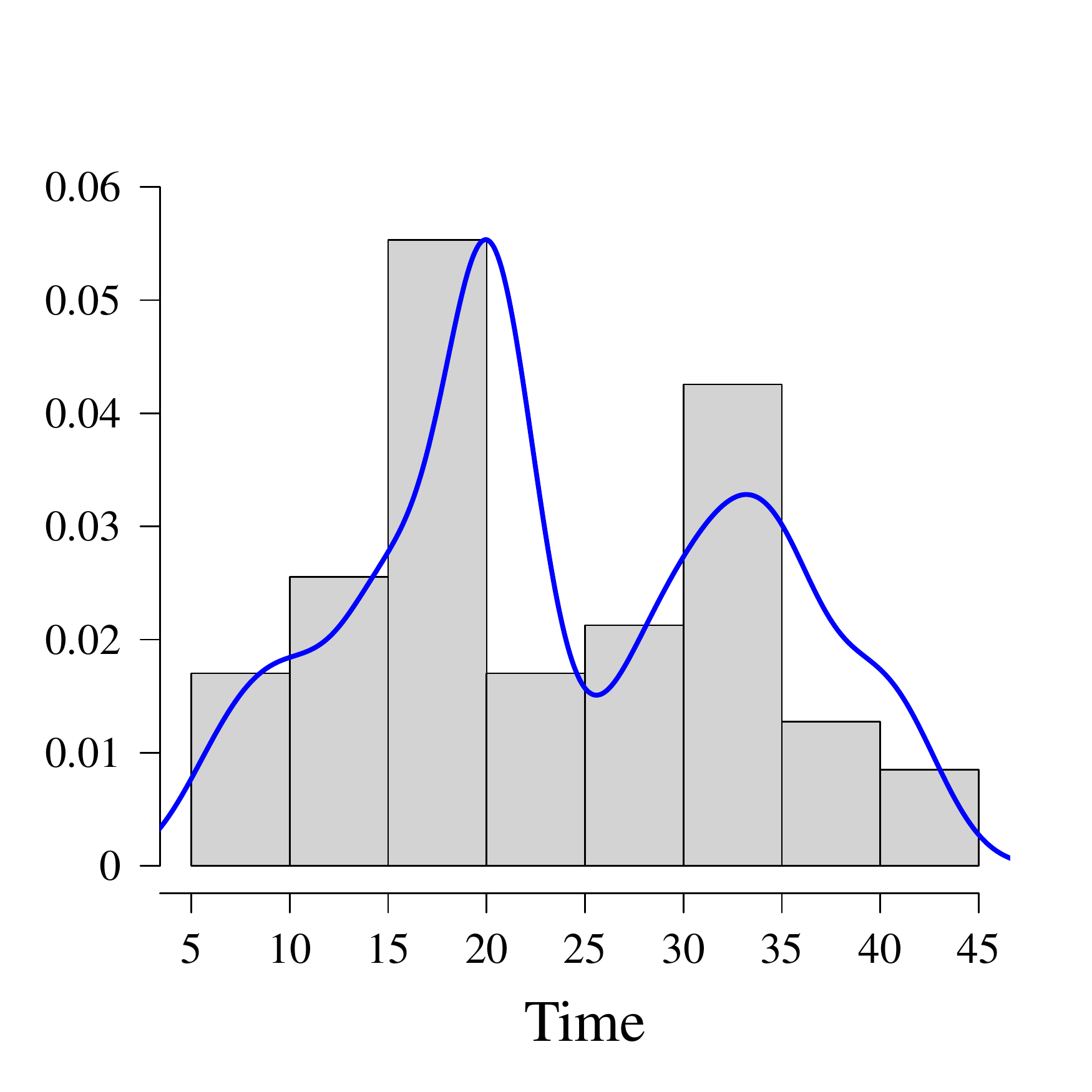} &
        \includegraphics[scale = 0.28]{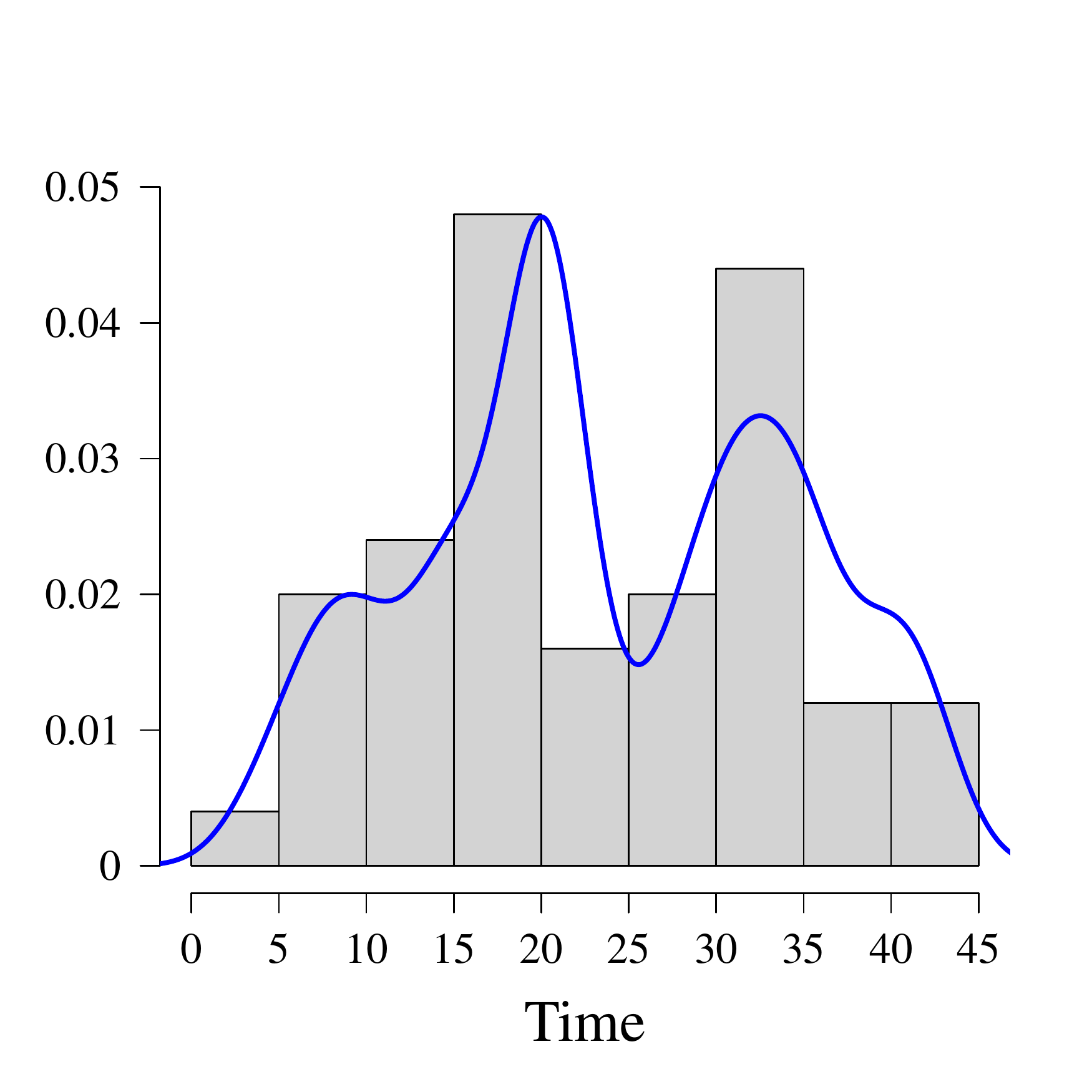} &
        \includegraphics[scale = 0.28]{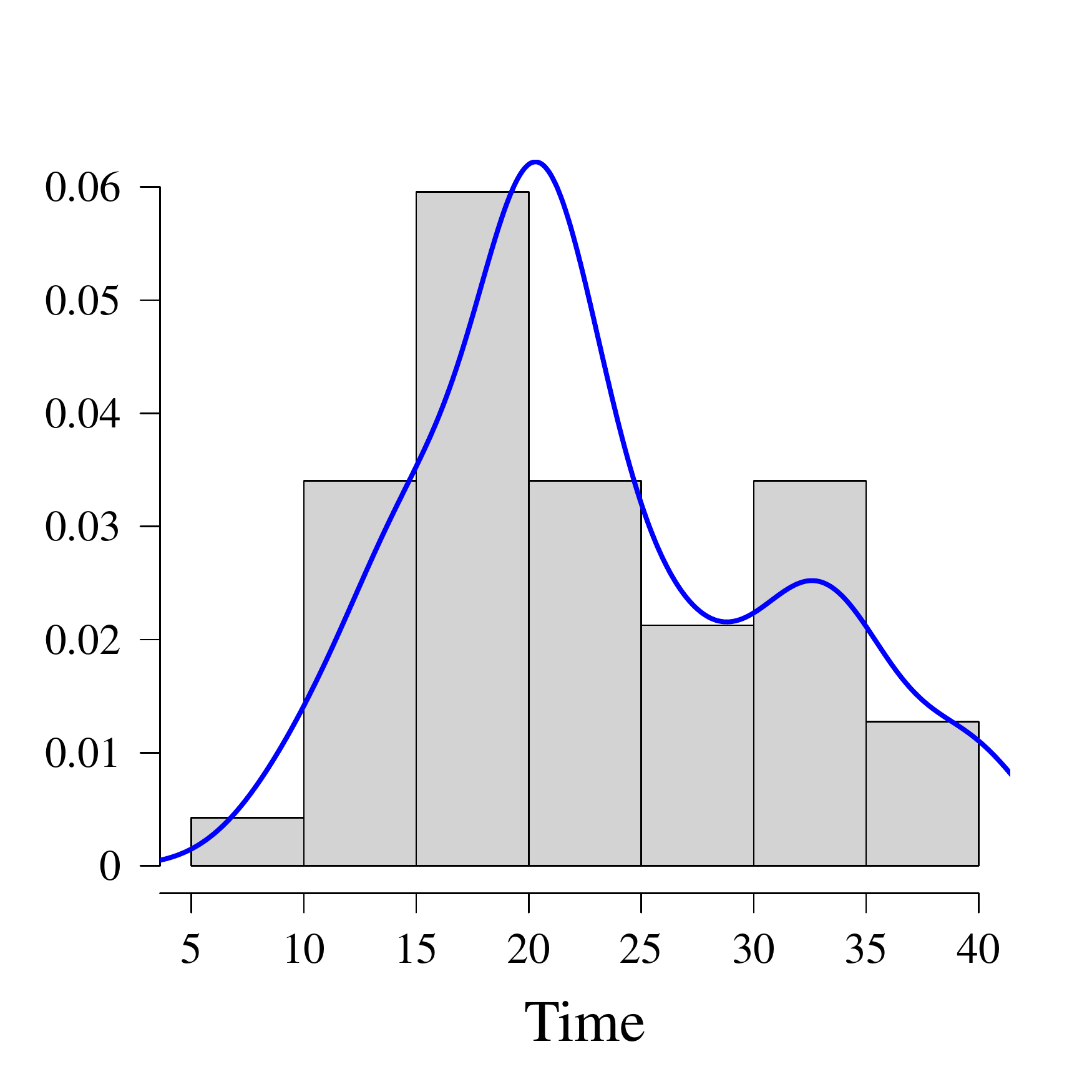} \\
        (a) minP-BH, $n_{s}=47$ & (b) LR-BH, $n_{s}=50$ & (c) CU1-BH, $n_{s}=47$
    \end{tabular}
    \caption{Changepoint locations from degree-based analysis of the MIT reality mining data.}
    \label{fig:mitdeg_hist}
\end{figure}
\begin{figure}[!ht]
    \begin{tabular}{cc}
        \includegraphics[scale = 0.45]{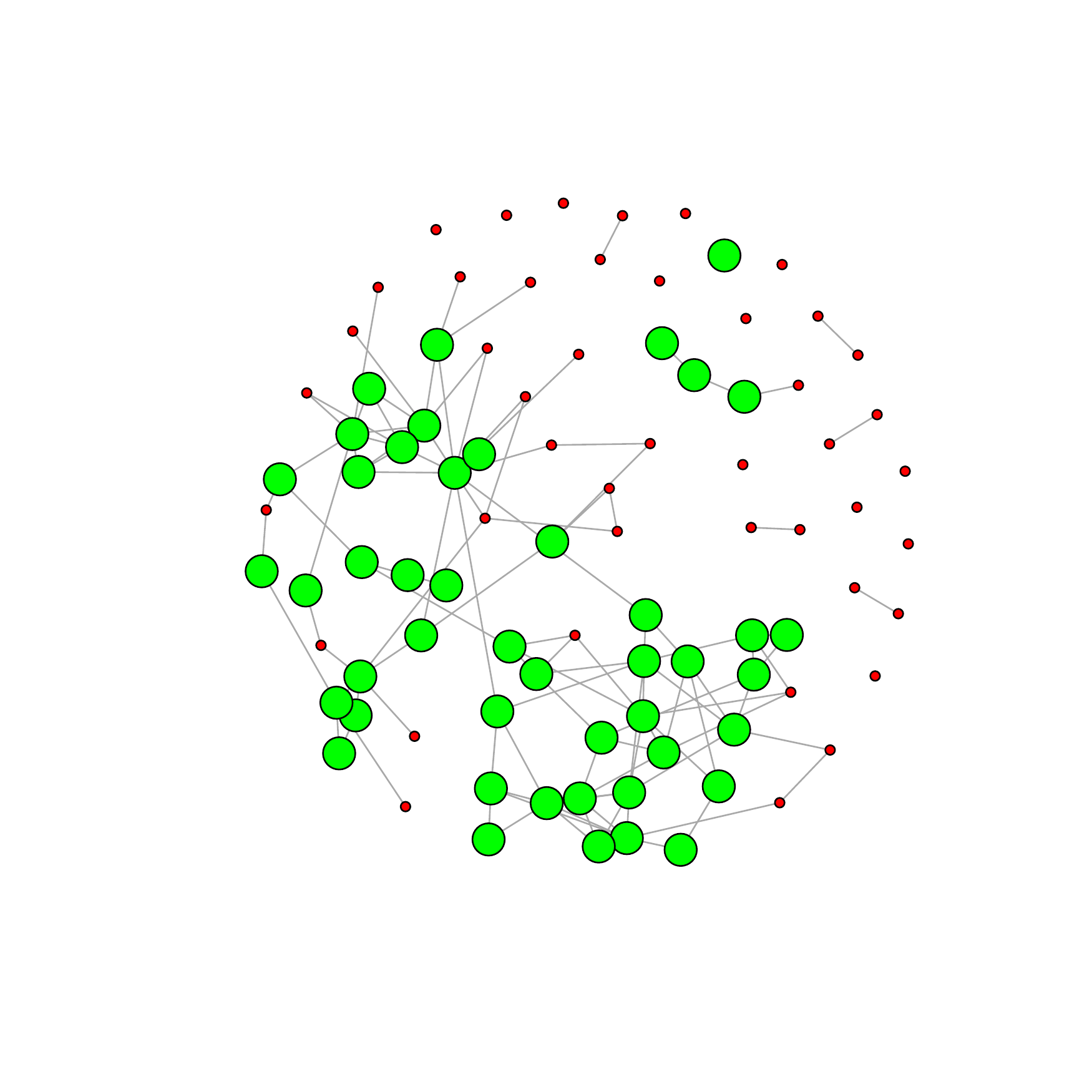} & 
        \includegraphics[scale = 0.45]{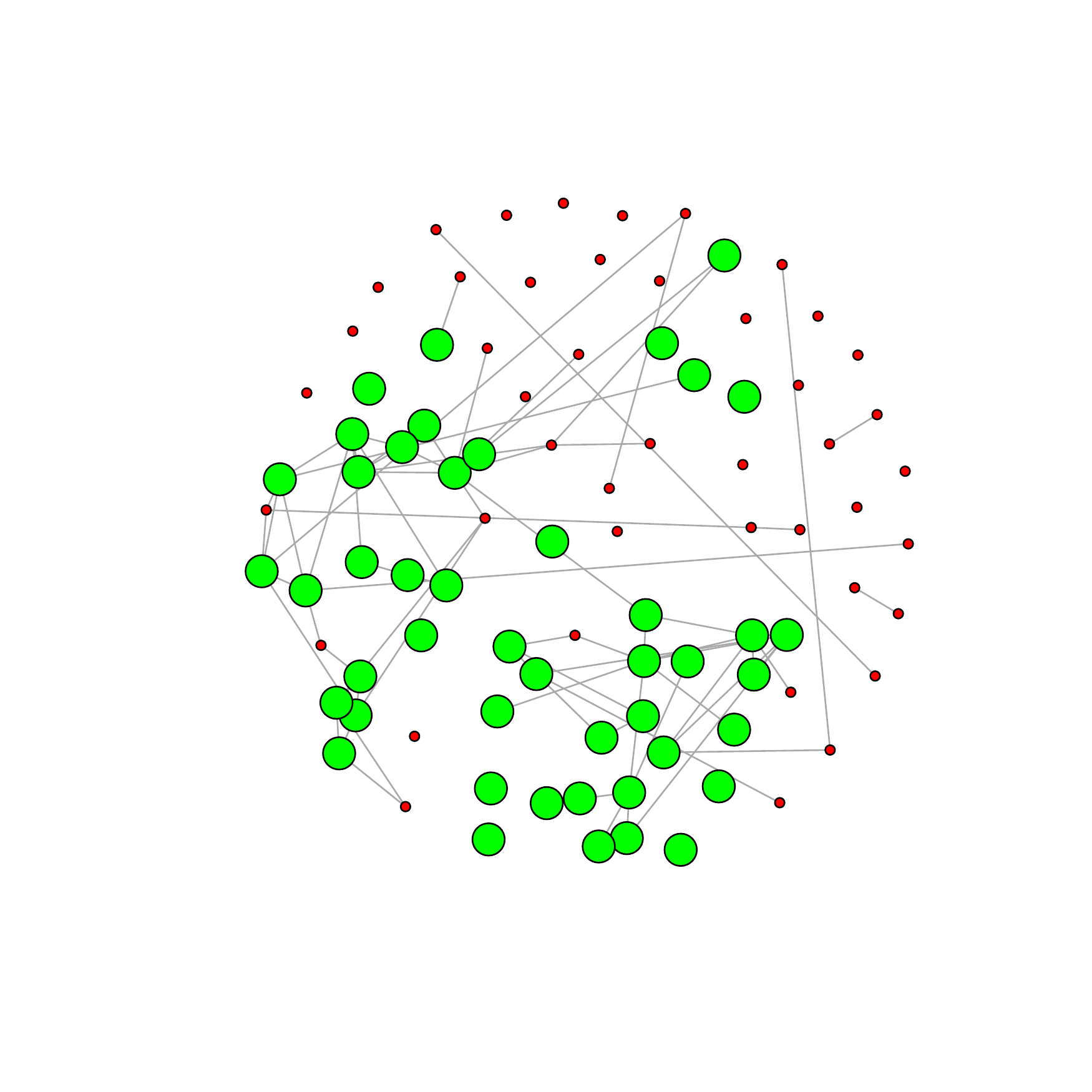} \\
        \quad(a) & \quad(b) 
    \end{tabular}
    \caption{Average networks (a) before and (b) after time-point 20 in the MIT reality mining data. Green nodes represent significant channels.}
    \label{fig:mit_graph}
\end{figure}

\section{Discussion}
\label{sec:discuss}
In this article, we have considered the problem of changepoint detection for binary and count data, and proposed exact tests that perform significantly better than Brownian-bridge based asymptotic tests in small samples. We have also considered multichannel data and used a multiple testing approach to test for changes in all channels simultaneously. This local approach outperforms the global approach of treating all channels together as a single object quite significantly in case of rare signals (i.e. when the number of channels with a changepoint is much smaller than the total number of channels).

Although the methods we propose are technically for single changes, they work quite well for multiple changepoints, especially when there is one large change. This is empirically demonstrated in the appendix.

\bibliographystyle{apalike}
\bibliography{bib}

\newpage
\appendix
\section{Appendix}

In this appendix, we evaluate the performances of the global and local tests conducted at level $\alpha=0.1$ for binary and count data. To construct the local tests, we implement the BH procedure and two more powerful adaptive methods for FDR control. The first one is the adaptive Benjamini-Hochberg (ABH) method where the number of true null hypotheses $m_0$ is estimated as $\widehat{m}_0$ using the proposal of \cite{hochberg1990more}, and this estimate is used in the BH method with $\alpha^{\ast}=\alpha m/\widehat{m}_0$. This approach gives better power than the BH method provided $m/\widehat{m}_0$ is significantly bigger than 1. The other method we use is the adaptive Storey-Taylor-Siegmund (STS) method proposed by \cite{storey2004strong}. 

We consider $m$ independent Bernoulli (Tables \ref{tab1:bern}-\ref{tab4:bern}) and Poisson (Tables \ref{tab1:pois}-\ref{tab4:pois}) time-series of size $T$. Among these $m$ channels, $n_{\text{cp}}$ channels contain changepoints. We compare the probabilities of global change detection (P(gCD)) of the global and local tests. For any local test, one of the tests among minP, LR and CU1 is used in each channel, and then one of the procedures among BH, ABH, and STS is applied to control FDR at level $\alpha=0.1$. Once channels are detected with changepoints by a local test, we provide a measure of the power of that local test in terms of its True Positive Rate (TPR) defined as $E[TP]/\max(1, n_{\text{cp}})$ where $TP$ is the number of channels with correctly detected changepoints. $P(gCD), TPR \text{ and } FDR$, presented in Tables \ref{tab1:bern}-\ref{tab4:pois}, are estimated based on 1000 Monte Carlo samples. 

\subsection{Studies with a single changepoint}
Tables \ref{tab1:bern}-\ref{tab3:bern} deal with Bernoulli time-series having a single changepoint at location $\tau$ in $n_{\text{cp}}$ channels. We make the following observations.
\begin{enumerate}
\item Power P(gCD) of the local tests are significantly higher than the global tests in setups where $n_{\text{cp}}$ is not large enough. This supports our claim that, compared to the global tests, local tests are more sensitive towards detecting a global change when signal is rare. If enough signal is present ($n_{\text{cp}}$ large), then global tests tend to produce similar power as the local tests. 

\item Under the sparse signal regime ($\pi_1 = 0.01, \pi_2 =0.1$), if the changepoint is closer to the boundary ($T=200, \tau=175$), then global tests fail miserably to detect a global change (Table \ref{tab2:bern}) whereas local tests are able to detect such a change with high probability even for small to moderate signal ($n_{\text{cp}}=10, 20$ for $m=200$, and $n_{\text{cp}}=20, 40$ for $m=1000$).

\item Performances of the three FDR controlling procedures are similar, although STS seems to yield slightly more power in most of the cases. 

\item gCU1 performs much better than gCU.5 when the changepoint occurs in the middle of the time-series, and gCU.5 significantly outperforms gCU1 when the changepoint is near the boundary. 

\item Both $P(gCD)$ and $TPR$ increase as $n_{\text{cp}}$ increases or $m$ decreases (see Tables \ref{tab1:bern}, \ref{tab2:bern} or Tables \ref{tab1:bern}, \ref{tab3:bern}). 

\item The CU1 local tests perform better than the minP and the LR local tests in most of the cases except for a few cases in Table \ref{tab2:bern} where the changepoint is near the boundary. 
\end{enumerate}

Tables \ref{tab1:pois}-\ref{tab3:pois} deal with Poisson time-series of small ($T=10$) and moderate ($T=50$) sizes having a single changepoint at $\tau$ in $n_{\text{cp}}$ channels. We observe the same phenomena mentioned in points 1, 3 and 4 above. If the channels have strong signal (e.g., $\lambda_1=0.3, \lambda_2=3$), then the local tests perform well even for small sample sizes (e.g., $T=10$) and very small $n_{\text{cp}}$ (e.g., $n_{\text{cp}}=2,6$). For cases with moderate sample size (e.g., $T=50$), global changes are detected with high probability even with weak signal (e.g., $\lambda_1=0.25, \lambda_2=0.75$). For settings with $T=50$, the global and local tests return higher values of $P(gCD)$ and $TPR$ when the changepoint occurs near the middle of the time-series. This simulation study also indicates that there is no uniformly best conditional exact test among minP, LR and CU1.

\subsection{Studies with multiple changepoints}
For a binary or count time-series with multiple changepoints, the exact conditional tests minP, LR and CU1 are theoretically valid level-$\alpha$ tests. We conduct comparative power analyses of these tests under the scenarios described in Tables~\ref{tab4:bern} and \ref{tab4:pois}. 

Table \ref{tab4:bern} deals with $m$ independent Bernoulli time-series of the form: $X_1,\dots,X_{\tau_1} \iid \bern(\pi_1)$, $X_{\tau_1+1},\dots,X_{\tau_2} \iid \bern(\pi_2)$, $X_{\tau_2+1},\dots,X_{\tau_3} \iid \bern(\pi_3)$, $X_{\tau_3+1},\dots,X_{T} \iid \bern(\pi_4)$ with $T=200$, $m=200$. $n_{\text{cp}}$ channels contain changepoints at $\tau_1 = 50, \tau_2 = 100$ and $\tau_3 = 150$. Interestingly, the LR test performs better than the minP and CU1 tests in all the scenarios considered here. Under the rare signal regime, the local tests produce much higher $P(gCD)$ than the global tests which supports the claims of Section~\ref{sec:discuss}. In the first setting ($\pi_1 =0.01, \pi_2 = 0.1, \pi_3= 0.2, \pi_4= 0.3$) in Table \ref{tab4:bern}, the Bernoulli probabilities increase steadily over time and strong signals are present in channels with changepoints. Thus $TPR$ and $P(gCD)$ are significantly high even for quite small $n_{\text{cp}}$. In the second setup ($\pi_1 =0.01, \pi_2 = 0.2, \pi_3= 0.1, \pi_4= 0.3$), the Bernoulli probabilities increase and decrease alternatively as time progresses. The local and the global tests experience in this setting a loss of power compared to the first setting. However, the local tests continue to uniformly (and significantly) dominate over the global tests. Note also that the CU1 local tests suffer the most whereas the LR local tests work the best. 

In Table \ref{tab4:pois}, we consider $m$ independent Poisson time-series of the form: $X_1,\dots,X_{\tau_1} \iid \pois(\lambda_1)$, $X_{\tau_1+1},\dots,X_{\tau_2} \iid \pois(\lambda_2)$, $X_{\tau_2+1},\dots,X_{\tau_3} \iid \pois(\lambda_3)$, $X_{\tau_3+1},\dots,X_{T} \iid \pois(\lambda_4)$ where $T=50$, $m=200$ and $n_{\text{cp}}$ channels contain changepoints at $\tau_1 = 10, \tau_2 = 20$ and $\tau_3 = 35$. Interestingly, the CU1 local tests outperform all other local tests for both the settings presented in Table \ref{tab4:pois}. The local and the global tests produce less power in the second setting ($\lambda_1 = 0.2$, $\lambda_2 = 0.8$, $\lambda_3 = 0.5$, $\lambda_4 = 1.4$) compared to the first setting ($\lambda_1 = 0.2$, $\lambda_2 = 0.5$, $\lambda_3 = 0.8$, $\lambda_4 = 1.4$). However, under both the settings, the local tests produce much higher power than the global tests when very few channels ($n_{\text{cp}}=2,4,8$) undergo changes, which again corroborates the claims of Section~\ref{sec:discuss}. 


\begin{table}
\caption{Comparison of gCU.5 and gCU1 with local tests obtained by combining the FDR controlling procedures BH, ABH and STS with minP, LR, and CU1 in $m$ independent Bernoulli series of the form: $X_1,\dots,X_{\tau} \iid \bern(\pi_1)$, $X_{\tau+1},\dots,X_{T} \iid \bern(\pi_2)$. Tests are conducted at level $\alpha=0.1$ and $n_{\text{cp}}$ channels (among $m$) contain a change at $\tau$.} 
\label{tab1:bern}
\begin{center}
\scalebox{.85}[0.85]{
\begin{tabular}{ |c c|c c| c c c|c c c|c c c|  }
\hline
\multicolumn{13}{|c|}{$m=200$, $T=50$, $\tau =25$, $\pi_1 = 0.01$, $\pi_2 = 0.30$  } \\
\hline \hline
             &  &  &  & minP & LR & CU1 & minP & LR & CU1     & minP & LR & CU1 \\
 $n_{\text{cp}}$ &   & gCU.5 & gCU1 & BH   & BH & BH   & ABH  & ABH & ABH & STS & STS & STS \\ [.5 ex]
\hline 
 & & & & & & & & & & & & \\ [-1.5ex]
0 & $P(gCD)$ & 0.114 & 0.082 & 0.086 & 0.108 & 0.09 & 0.086 & 0.108 & 0.09 & 0.09 & 0.11  & 0.092 \\ [.5 ex]
\hline   
 & & & & & & & & & & & & \\ [-1.5ex]
5 & $P(gCD)$ & 0.276 & 0.432 & 0.430 & 0.506 & 0.540 & 0.430 & 0.506 & 0.540 & 0.436 & 0.508 & 0.546\\ [.5 ex]
   & $TPR$     &           &            & 0.122 & 0.168 & 0.182 & 0.122 & 0.168 & 0.182 & 0.124 & 0.169 & 0.186 \\ [.5 ex]
   & $FDR$     &            &          & 0.091 & 0.092 & 0.088 & 0.091 & 0.094 & 0.089 & 0.091 & 0.094 & 0.089 \\ [.5 ex]
10 & $P(gCD)$ & 0.636 & 0.806  & 0.704 & 0.784 & 0.812 & 0.710 & 0.786 & 0.816 & 0.726 & 0.788 & 0.830 \\ [.5 ex]
   & $TPR$       &            &            & 0.182 & 0.238 & 0.265 & 0.185 & 0.239 & 0.270 & 0.191 & 0.242 & 0.280 \\ [.5 ex]
   & $FDR$     &              &            & 0.081 & 0.074 & 0.076 & 0.084 & 0.076 & 0.080 & 0.086 & 0.078 & 0.082 \\ [.5 ex]
15 & $P(gCD)$ & 0.912 & 0.974  & 0.854 & 0.914 & 0.924 & 0.856 & 0.916 & 0.928 & 0.868 & 0.920 & 0.930 \\ [.5 ex]
   & $TPR$       &           &             & 0.241 & 0.302 & 0.340 & 0.249 & 0.310 & 0.359 & 0.253 & 0.311 & 0.360 \\ [.5 ex]
   & $FDR$       &           &             & 0.089 & 0.090 & 0.091 & 0.093 & 0.093 & 0.096 & 0.095 & 0.094 & 0.094 \\ [.5 ex]
\hline
\hline
\multicolumn{13}{|c|}{$m=200$, $T=50$, $\tau = 40$, $\pi_1 = 0.01$, $\pi_2 = 0.30$ } \\
\hline 
\hline
 & & & & & & & & & & & & \\ [-1.5ex]
5 & $P(gCD)$ & 0.182 & 0.148 & 0.616 & 0.614 & 0.624 & 0.616 & 0.614 & 0.624 & 0.622 & 0.616 & 0.628 \\ [.5 ex]
   & $TPR$     &           &            & 0.193 & 0.194 & 0.219 & 0.193 & 0.194 & 0.220 & 0.195 & 0.195 & 0.221 \\ [.5 ex]
   & $FDR$     &           &            & 0.098 & 0.093 & 0.090 & 0.098 & 0.094 & 0.091 & 0.100 & 0.097 & 0.091 \\ [.5 ex]
10 & $P(gCD)$ & 0.458 & 0.178 & 0.832 & 0.828 & 0.852 & 0.834 & 0.828 & 0.852 & 0.838 & 0.832 & 0.854 \\ [.5 ex]
   & $TPR$       &            &           & 0.234 & 0.240 & 0.266 & 0.234 & 0.241 & 0.267 & 0.238 & 0.243 & 0.268 \\ [.5 ex]
   & $FDR$     &              &          & 0.084 & 0.083 & 0.080 & 0.088 & 0.084 & 0.081 & 0.088 & 0.086 & 0.082 \\ [.5 ex]
15 & $P(gCD)$ & 0.754 & 0.304 & 0.936 & 0.936 & 0.950 & 0.936 & 0.936 & 0.952 & 0.936 & 0.936 & 0.954 \\ [.5 ex]
   & $TPR$       &           &           & 0.308 & 0.311 & 0.329 & 0.309 & 0.312 & 0.333 & 0.312 & 0.314 & 0.337 \\ [.5 ex]
   & $FDR$       &          &          & 0.090 & 0.091 & 0.087 & 0.092 & 0.094 & 0.089 & 0.095 & 0.096 & 0.091 \\ [.5 ex]
\hline
\hline
\multicolumn{13}{|c|}{$m=200$, $T=200$, $\tau =100$, $\pi_1 = 0.01$, $\pi_2 = 0.10$  } \\
\hline 
\hline
 & & & & & & & & & & & & \\ [-1.5ex]
0 & $P(gCD)$ & 0.084 & 0.096 & 0.104 & 0.098 & 0.090 & 0.104 & 0.098 & 0.090 & 0.106 & 0.102 & 0.092 \\ [.5 ex]
\hline   
 & & & & & & & & & & & & \\ [-1.5ex] 
10 & $P(gCD)$ & 0.232 & 0.320 & 0.726 & 0.828 & 0.854 & 0.728 & 0.830 & 0.858 & 0.744 & 0.838 & 0.864 \\ [.5 ex]
   & $TPR$     &            &            & 0.189 & 0.258 & 0.287 & 0.192 & 0.260 & 0.293 & 0.199 & 0.266 & 0.300 \\ [.5 ex]
   & $FDR$     &           &            & 0.097 & 0.097 & 0.088 & 0.097 & 0.097 & 0.091 & 0.099 & 0.100 & 0.089 \\ [.5 ex]
20 & $P(gCD)$ & 0.454 & 0.676 & 0.964 & 0.984 & 0.992 & 0.964 & 0.984 & 0.992 & 0.966 & 0.986 & 0.992 \\ [.5 ex]
   & $TPR$       &            &           & 0.303 & 0.378 & 0.431 & 0.314 & 0.388 & 0.446 & 0.323 & 0.393 & 0.452 \\ [.5 ex]
   & $FDR$     &              &           & 0.093 & 0.091 & 0.083 & 0.097 & 0.095 & 0.090 & 0.101 & 0.099 & 0.094 \\ [.5 ex]
30 & $P(gCD)$ & 0.794 & 0.928 & 1 &  1 &  1 &  1  &  1 &  1 &  1 &  1 & 1 \\ [.5 ex]
   & $TPR$       &           &           & 0.365 & 0.433 & 0.517 & 0.383 & 0.452 & 0.546 & 0.395 & 0.459 & 0.557 \\ [.5 ex]
   & $FDR$       &           &           & 0.083 & 0.083 & 0.082 & 0.090 & 0.091 & 0.093 & 0.094 & 0.095 & 0.096 \\ [.5 ex]
\hline
\end{tabular} }
\end{center}
\end{table}

\begin{table}
\caption{(Cont'd) Comparison of gCU.5 and gCU1 with local tests obtained by combining the FDR controlling procedures BH, ABH and STS with minP, LR, and CU1 in $m$ independent Bernoulli series of the form: $X_1,\dots,X_{\tau} \iid \bern(\pi_1)$, $X_{\tau+1},\dots,X_{T} \iid \bern(\pi_2)$. Tests are conducted at level $\alpha=0.1$ and $n_{\text{cp}}$ channels (among $m$) contain a change at $\tau$.} 
\label{tab2:bern}
\begin{center}
\scalebox{0.85}[0.85]{
\begin{tabular}{ |c c|c c| c c c|c c c|c c c|  }
\hline
\multicolumn{13}{|c|}{$m=200$, $T=200$, $\tau =175$, $\pi_1 = 0.01$, $\pi_2 = 0.10$  } \\
\hline 
\hline
               &  &  &  & minP & LR & CU1 & minP & LR & CU1     & minP & LR & CU1 \\
 $n_{\text{cp}}$ &  & gCU.5 & gCU1 & BH   & BH & BH   & ABH  & ABH & ABH & STS & STS & STS \\ [.5 ex]
\hline 
 & & & & & & & & & & & & \\ [-1.5ex]
10 & $P(gCD)$ & 0.148   & 0.120 & 0.720 & 0.684 & 0.610 & 0.722 & 0.684 & 0.610 & 0.722 & 0.688 & 0.620\\ [.5 ex]
     & $TPR$     &              &          & 0.137      & 0.119  & 0.104 & 0.139 & 0.120 & 0.104 & 0.141 & 0.121 & 0.106 \\ [.5 ex]
     & $FDR$     &              &          & 0.104      & 0.102 & 0.098 & 0.104 & 0.104 & 0.099 & 0.107 & 0.105 & 0.102 \\ [.5 ex]
20 & $P(gCD)$ & 0.190    & 0.130  & 0.926 & 0.894 & 0.866 & 0.926 & 0.894 & 0.866 & 0.926 & 0.898 & 0.870 \\ [.5 ex]
     & $TPR$       &            &           & 0.176      & 0.157 & 0.138 & 0.180 & 0.159 & 0.140 & 0.184 & 0.164 & 0.144 \\ [.5 ex]
     & $FDR$     &              &           & 0.093      & 0.090 & 0.103 & 0.095 & 0.092 & 0.103 & 0.097 & 0.098 & 0.105 \\ [.5 ex]
30 & $P(gCD)$ & 0.445   & 0.142 & 0.976      & 0.966 & 0.960 & 0.976  & 0.966 & 0.960 & 0.976 & 0.966 & 0.960 \\ [.5 ex]
     & $TPR$       &           &          & 0.199     & 0.178 & 0.156 & 0.208       & 0.184 & 0.159 & 0.217 & 0.189 & 0.166 \\ [.5 ex]
     & $FDR$       &           &         & 0.081       & 0.080 & 0.083 & 0.084      & 0.085 & 0.088 & 0.089 & 0.088 & 0.092 \\ [.5 ex]
\hline
\hline
\multicolumn{13}{|c|}{$m=1000$, $T=200$, $\tau =100$, $\pi_1 = 0.01$, $\pi_2 = 0.10$  } \\
\hline \hline 
 & & & & & & & & & & & & \\ [-1.5ex]
0 & $P(gCD)$ & 0.110 & 0.095 & 0.122 & 0.122 & 0.106 & 0.122 & 0.122 & 0.106  & 0.122 & 0.122 & 0.106 \\ [.5 ex]
\hline   
 & & & & & & & & & & & & \\ [-1.5ex] 
20 & $P(gCD)$ & 0.192 & 0.330 & 0.674 & 0.782 & 0.794 & 0.674 & 0.782 & 0.794 & 0.674 & 0.782 & 0.794  \\ [.5 ex]
   & $TPR$     &            &           & 0.073 & 0.118 & 0.121 & 0.073 & 0.118 & 0.121  & 0.073 & 0.118 & 0.122   \\ [.5 ex]
   & $FDR$     &            &           & 0.109 & 0.110 & 0.101  & 0.109  & 0.110 & 0.101 & 0.109  & 0.110 & 0.101  \\ [.5 ex]
40 & $P(gCD)$ & 0.370 & 0.650 & 0.882 & 0.966 & 0.968 & 0.882 & 0.966 & 0.968  & 0.886 & 0.968 & 0.968      \\ [.5 ex]
   & $TPR$     &            &           & 0.110 & 0.179 & 0.198 & 0.111 & 0.180 & 0.200  & 0.114 & 0.184 & 0.205  \\ [.5 ex]
   & $FDR$     &           &            & 0.102 & 0.096 & 0.098  & 0.102 & 0.096 & 0.098 & 0.103 & 0.099 & 0.101  \\ [.5 ex]
60 & $P(gCD)$ & 0.700 & 0.925  & 0.970 & 0.996 & 0.992 & 0.970 & 0.996 & 0.992 & 0.970 & 0.996 & 0.994     \\ [.5 ex]
   & $TPR$     &           &            & 0.171 & 0.252 & 0.286 & 0.174 & 0.257 & 0.292 & 0.180 & 0.261 & 0.300 \\ [.5 ex]
   & $FDR$     &           &           & 0.091 & 0.090 & 0.091  & 0.093 & 0.092 & 0.093  & 0.097 & 0.094 & 0.096   \\ [.5 ex]
\hline
\hline
\multicolumn{13}{|c|}{$m=1000$, $T=200$, $\tau =175$, $\pi_1 = 0.01$, $\pi_2 = 0.10$  } \\
\hline 
\hline   
 & & & & & & & & & & & & \\ [-1.5ex] 
20 & $P(gCD)$ & 0.148 & 0.125 &  0.682 & 0.682 & 0.656 & 0.682 & 0.682 & 0.656 & 0.682 & 0.682 & 0.656 \\ [.5 ex]
   & $TPR$     &            &          & 0.074 & 0.072 & 0.062 & 0.074 & 0.072 & 0.062  & 0.074 & 0.072 & 0.062   \\ [.5 ex]
   & $FDR$     &            &          & 0.113 & 0.115 & 0.109  & 0.113 & 0.115 & 0.109  & 0.115 & 0.115 & 0.109  \\ [.5 ex]
40 & $P(gCD)$ & 0.256 & 0.160 & 0.912 & 0.900 & 0.872 & 0.912 & 0.900 & 0.872  & 0.912 & 0.900 & 0.872    \\ [.5 ex]
   & $TPR$     &            &          & 0.100 & 0.088 & 0.075 & 0.100 & 0.088 & 0.075   & 0.101 & 0.089 & 0.075   \\ [.5 ex]
   & $FDR$     &            &         & 0.106 & 0.100 & 0.094  & 0.106 & 0.100 & 0.094   & 0.107 & 0.102 & 0.096    \\ [.5 ex]
60 & $P(gCD)$ & 0.354 & 0.185 & 0.990 & 0.982 & 0.974 & 0.990 & 0.982 & 0.974 & 0.990 & 0.982 & 0.976   \\ [.5 ex]
   & $TPR$     &            &          & 0.129 & 0.112 & 0.091 & 0.130 & 0.112 & 0.091 & 0.132 & 0.113 & 0.093 \\ [.5 ex]
   & $FDR$     &            &         & 0.096 & 0.092 & 0.093  & 0.097 & 0.092 & 0.093  & 0.099 & 0.094 & 0.094   \\ [.5 ex]
\hline
\end{tabular} }
\end{center}
\end{table}

\begin{table}
\caption{(Cont'd) Comparison of gCU.5 and gCU1 with local tests obtained by combining the FDR controlling procedures BH, ABH and STS with minP, LR, and CU1 in $m$ independent Bernoulli series of the form: $X_1,\dots,X_{\tau} \iid \bern(\pi_1)$, $X_{\tau+1},\dots,X_{T} \iid \bern(\pi_2)$. Tests are conducted at level $\alpha=0.1$ and $n_{\text{cp}}$ channels (among $m$) contain a change at $\tau$.} 
\label{tab3:bern}
\begin{center}
\scalebox{0.85}[0.85]{
\begin{tabular}{ |c c|c c| c c c|c c c|c c c|  }
\hline
\multicolumn{13}{|c|}{$m=1000$, $T=50$, $\tau = 25$, $\pi_1 = 0.01$, $\pi_2 = 0.30$  } \\
\hline \hline
               &  &  & & minP & LR & CU1 & minP & LR & CU1     & minP & LR & CU1 \\
 $n_{\text{cp}}$ &  & gCU.5 & gCU1 & BH   & BH & BH   & ABH  & ABH & ABH & STS & STS & STS \\ [.5 ex]
\hline 
 & & & & & & & & & & & & \\ [-1.5ex]
0 & $P(gCD)$ & 0.096 & 0.11 &  0.086 & 0.086 & 0.080 & 0.086 & 0.086 & 0.080 & 0.086 & 0.086  & 0.080 \\ [.5 ex]
\hline   
 & & & & & & & & & & & & \\ [-1.5ex]
10 & $P(gCD)$ & 0.258  & 0.404 & 0.356 & 0.402 & 0.428 & 0.356 & 0.402 & 0.428 & 0.356 & 0.404 & 0.428 \\ [.5 ex]
   & $TPR$     &             &          & 0.046  & 0.055  & 0.065 & 0.046  & 0.055  & 0.065 & 0.046  & 0.055  & 0.065 \\ [.5 ex]
   & $FDR$     &             &          & 0.091  & 0.109 & 0.095  & 0.091  & 0.109 & 0.095   & 0.091  & 0.110 & 0.095  \\ [.5 ex]
20 & $P(gCD)$ & 0.566 &  0.792 & 0.590 & 0.650 & 0.722 & 0.590 & 0.650 & 0.722 & 0.590 & 0.650 & 0.722  \\ [.5 ex]
   & $TPR$       &           &          & 0.065 & 0.085 & 0.099 & 0.065 & 0.085 & 0.099 & 0.065 & 0.085 & 0.100 \\ [.5 ex]
   & $FDR$     &             &           & 0.098 & 0.094 & 0.089 & 0.098 & 0.094 & 0.089 & 0.098 & 0.094 & 0.088 \\ [.5 ex]
30 & $P(gCD)$ & 0.850 & 0.966  & 0.744 & 0.810 & 0.822 & 0.744 & 0.810 & 0.822  & 0.744 & 0.810 & 0.822  \\ [.5 ex]
   & $TPR$       &           &          & 0.081 & 0.108 & 0.122 & 0.081 & 0.109 & 0.122 & 0.081 & 0.108 & 0.124 \\ [.5 ex]
   & $FDR$       &           &          & 0.085  & 0.082 & 0.078  & 0.085  & 0.083 & 0.078 & 0.085 & 0.083 & 0.079 \\ [.5 ex]
\hline
\hline
\multicolumn{13}{|c|}{$m=1000$, $T=50$, $\tau =40$, $\pi_1 = 0.01$, $\pi_2 = 0.30$ } \\
\hline 
\hline  
 & & & & & & & & & & & & \\ [-1.5ex] 
10 & $P(gCD)$ & 0.182  & 0.132  & 0.494 & 0.474 & 0.614 & 0.494 & 0.474 & 0.614 & 0.494 & 0.474 & 0.614 \\ [.5 ex]
   & $TPR$     &              &          & 0.084 & 0.071 & 0.103 & 0.084 & 0.071 & 0.103 & 0.084 & 0.071 & 0.103  \\ [.5 ex]
   & $FDR$     &              &          & 0.090 & 0.109 & 0.098   & 0.090 & 0.109 & 0.098   & 0.090 & 0.109 & 0.098  \\ [.5 ex]
20 & $P(gCD)$ & 0.418  & 0.166  & 0.782 & 0.754 &0.862  & 0.782 & 0.754 & 0.862  & 0.782 & 0.754 & 0.862      \\ [.5 ex]
   & $TPR$     &             &          & 0.125 & 0.119 & 0.145 & 0.125 & 0.119 & 0.145  & 0.125 & 0.120 & 0.146 \\ [.5 ex]
   & $FDR$     &             &          & 0.098 & 0.092 & 0.102   & 0.098 & 0.092 & 0.102  & 0.098 & 0.092 & 0.102  \\ [.5 ex]
30 & $P(gCD)$ & 0.670 & 0.242  & 0.898 & 0.894 & 0.936 & 0.898 & 0.894 & 0.936  & 0.898 & 0.894 & 0.936     \\ [.5 ex]
   & $TPR$     &            &         & 0.147 & 0.140 & 0.172  & 0.147 & 0.140 & 0.172   & 0.148 & 0.140 & 0.173  \\ [.5 ex]
   & $FDR$     &            &         & 0.079 & 0.076 & 0.078   & 0.079 & 0.076 & 0.078     & 0.079 & 0.076 & 0.078     \\ [.5 ex]
\hline
\end{tabular} }
\end{center}
\end{table}

\begin{table}
\caption{[Multiple changepoints] Comparison of gCU.5 and gCU1 with local tests obtained by combining the FDR controlling procedures BH, ABH and STS with minP, LR, and CU1 in $m$ independent Bernoulli series of the form: $X_1,\dots,X_{\tau_1} \iid \bern(\pi_1)$, $X_{\tau_1+1},\dots,X_{\tau_2} \iid \bern(\pi_2)$, $X_{\tau_2+1},\dots,X_{\tau_3} \iid \bern(\pi_3)$ and $X_{\tau_3+1},\dots,X_{T} \iid \bern(\pi_4)$, where $T=200$ and $m=200$. Tests are conducted at level $\alpha=0.1$ and $n_{\text{cp}}$ channels (among $m$) contain 3 changes at locations $\tau_1 = 50, \tau_2 = 100$ and $\tau_3 = 150$.} 
\label{tab4:bern}
\begin{center}
\scalebox{0.85}[0.85]{
\begin{tabular}{ |c c|c c| c c c|c c c|c c c|  }
\hline
\multicolumn{13}{|c|}{ $\pi_1 = 0.01$, $\pi_2 = 0.1$, $\pi_3 = 0.2$, $\pi_4 = 0.3$ } \\
\hline 
\hline
               &  & & & minP & LR & CU1 & minP & LR & CU1     & minP & LR & CU1 \\
 $n_{\text{cp}}$ &  & gCU.5 & gCU1 & BH   & BH & BH   & ABH  & ABH & ABH & STS & STS & STS \\ [.5 ex]
\hline 
 & & & & & & & & & & & & \\ [-1.5ex]
0 & $P(gCD)$ & 0.084 & 0.096 & 0.104 & 0.098 & 0.090 & 0.104 & 0.098 & 0.090 & 0.106 & 0.102  & 0.092 \\ [.5 ex]
\hline   
 & & & & & & & & & & & & \\ [-1.5ex]
2 & $P(gCD)$ & 0.227   & 0.328 & 0.930 & 0.954 & 0.922 & 0.930 & 0.954 & 0.922 & 0.930 & 0.954 & 0.930 \\ [.5 ex]
   & $TPR$     &             &           & 0.761  & 0.811  & 0.764  & 0.761  & 0.811  & 0.764   & 0.763  & 0.812  & 0.776 \\ [.5 ex]
   & $FDR$     &             &          & 0.092  & 0.087 & 0.101   & 0.092  & 0.087 & 0.101   & 0.094  & 0.089 & 0.105   \\ [.5 ex]
4 & $P(gCD)$ & 0.556  &  0.660 & 1 & 1 & 0.992 & 1 & 1 & 0.992& 1 & 1 & 0.992 \\ [.5 ex]
   & $TPR$       &           &           & 0.832 & 0.861 & 0.832  & 0.832 & 0.863 & 0.833  & 0.836 & 0.865 & 0.837  \\ [.5 ex]
   & $FDR$     &             &          & 0.110 & 0.104 & 0.100  & 0.110 & 0.106 & 0.102  & 0.112 & 0.105 & 0.104 \\ [.5 ex]
6 & $P(gCD)$ & 0.826 & 0.892  & 1 & 1  & 1  & 1  & 1  & 1  & 1  & 1  & 1   \\ [.5 ex]
   & $TPR$       &          &           & 0.866 & 0.889 & 0.869 & 0.867 & 0.890 & 0.871 & 0.870 & 0.891 & 0.872 \\ [.5 ex]
   & $FDR$       &          &           & 0.102  & 0.096 & 0.099  & 0.103  & 0.098 & 0.103 & 0.104 & 0.098 & 0.104 \\ [.5 ex]
\hline
\hline
\multicolumn{13}{|c|}{ $\pi_1 = 0.01$, $\pi_2 = 0.2$, $\pi_3 = 0.1$, $\pi_4 = 0.3$ } \\
\hline 
\hline  
 & & & & & & & & & & & & \\ [-1.5ex] 
2 & $P(gCD)$ & 0.188   & 0.156 & 0.692 & 0.806 & 0.476 & 0.692 & 0.808 & 0.476 & 0.692 & 0.808 & 0.484 \\ [.5 ex]
   & $TPR$     &             &           & 0.475 & 0.602 & 0.281  & 0.475 & 0.605 & 0.282  & 0.479 & 0.605 & 0.288   \\ [.5 ex]
   & $FDR$     &             &           & 0.090 & 0.086 & 0.086    & 0.090 & 0.085 & 0.086    & 0.091 & 0.088 & 0.089   \\ [.5 ex]
4 & $P(gCD)$ & 0.402   & 0.254 & 0.910 & 0.978 & 0.692  & 0.910 & 0.978 & 0.692     & 0.910 & 0.978 & 0.698    \\ [.5 ex]
   & $TPR$     &             &           & 0.570 & 0.689 & 0.321  & 0.571 & 0.693 & 0.322   & 0.576 & 0.692 & 0.327 \\ [.5 ex]
   & $FDR$     &             &          & 0.109 & 0.104 & 0.099   & 0.109 & 0.104 & 0.103  & 0.110 & 0.104 & 0.102  \\ [.5 ex]
6 & $P(gCD)$ & 0.632  & 0.352 & 0.982 & 0.994 & 0.866 & 0.982 & 0.996 & 0.866  & 0.984 & 0.994 & 0.872     \\ [.5 ex]
   & $TPR$     &            &           & 0.650 & 0.751 & 0.384  & 0.651 & 0.754 & 0.387    & 0.655 & 0.756 & 0.391    \\ [.5 ex]
   & $FDR$     &            &          & 0.102 & 0.100 & 0.100    & 0.105 & 0.102 & 0.101      & 0.106 & 0.103 & 0.105       \\ [.5 ex]
\hline
\end{tabular} }
\end{center}
\end{table}


\begin{table}
\caption{Comparison of gCU.5 and gCU1 with local tests obtained by combining the FDR controlling procedures BH, ABH and STS with minP, LR, and CU1 in $m$ independent Poisson series of the form: $X_1,\dots,X_{\tau} \iid \pois(\lambda_1)$,\, $X_{\tau+1},\dots,X_{T} \iid \pois(\lambda_2)$. Tests are conducted at level $\alpha=0.1$ and $n_{\text{cp}}$ channels (among $m$) contain a change at $\tau$.} 
\label{tab1:pois}
\begin{center}
\scalebox{0.85}[0.85]{
\begin{tabular}{ |c c|c c| c c c|c c c|c c c|  }
\hline
\multicolumn{13}{|c|}{$m=1000$, $T=50$, $\tau = 25$, $\lambda_1 = 0.15$, $\lambda_2 = 0.75$  } \\
\hline 
\hline
             &  &  &  & minP & LR & CU1 & minP & LR & CU1     & minP & LR & CU1 \\
 $n_{\text{cp}}$ &   & gCU.5 & gCU1 & BH   & BH & BH   & ABH  & ABH & ABH & STS & STS & STS \\ [.5 ex]
\hline 
 & & & & & & & & & & & & \\ [-1.5ex]
0 & $P(gCD)$ & 0.095 & 0.115 & 0.11 & 0.115 & 0.135 & 0.11 & 0.115 & 0.135 & 0.11 & 0.115 & 0.135 \\ [.5 ex]
\hline   
 & & & & & & & & & & & & \\ [-1.5ex]
10 & $P(gCD)$ & 0.225 & 0.325 & 0.830 & 0.900 & 0.890 & 0.830 & 0.900 & 0.890 & 0.835 & 0.900 & 0.890  \\ [.5 ex]
   & $TPR$     &           &            & 0.197 & 0.242 & 0.283 & 0.197 & 0.242 & 0.283 & 0.199 & 0.242 & 0.283 \\ [.5 ex]
   & $FDR$     &            &          & 0.113 & 0.102 & 0.099 & 0.113 & 0.102 & 0.099 & 0.114 & 0.102 & 0.099 \\ [.5 ex]
20 & $P(gCD)$ & 0.395 & 0.670  & 0.980 & 0.985 & 0.995 & 0.980 & 0.985 & 0.995 &  0.980 & 0.985 & 0.995 \\ [.5 ex]
   & $TPR$       &            &            & 0.260 & 0.277 & 0.350 & 0.261 & 0.277 & 0.350 & 0.262 & 0.278 & 0.352 \\ [.5 ex]
   & $FDR$     &              &            & 0.095 & 0.101 & 0.098 & 0.095 & 0.101 & 0.098 & 0.095 & 0.101 & 0.099 \\ [.5 ex]
30 & $P(gCD)$ & 0.655 & 0.905  & 0.995 & 0.995 & 0.995 & 0.995 & 0.9995 & 0.995 & 0.995 & 0.9995 & 0.995 \\ [.5 ex]
   & $TPR$       &           &             & 0.298 & 0.320 & 0.400 & 0.298 & 0.321 & 0.401 & 0.303 & 0.322 & 0.402\\ [.5 ex]
   & $FDR$       &           &             & 0.085 & 0.095 & 0.093 & 0.086 & 0.095 & 0.094 & 0.091 & 0.099 & 0.095 \\ [.5 ex]
\hline
\hline
\multicolumn{13}{|c|}{$m=1000$, $T=50$, $\tau = 40$, $\lambda_1 = 0.15$, $\lambda_2 = 0.75$ } \\
\hline 
\hline 
 & & & & & & & & & & & & \\ [-1.5ex]
10 & $P(gCD)$ & 0.175 & 0.125 & 0.710 & 0.635 & 0.695 & 0.710 & 0.635 & 0.695 &  0.710 & 0.635 & 0.700 \\ [.5 ex]
   & $TPR$     &           &            & 0.147 & 0.113 & 0.132 & 0.147 & 0.113 & 0.132 & 0.147 & 0.113 & 0.134 \\ [.5 ex]
   & $FDR$     &           &            & 0.100 & 0.117 & 0.113 & 0.100 & 0.117 & 0.113  & 0.100 & 0.117 & 0.113  \\ [.5 ex]
20 & $P(gCD)$ & 0.250 & 0.165 & 0.920 & 0.900 & 0.920 & 0.920 & 0.900 & 0.920 & 0.920 & 0.900 & 0.920 \\ [.5 ex]
   & $TPR$       &            &           & 0.192 & 0.158 & 0.173 & 0.192 & 0.158 & 0.173 & 0.193 & 0.158 & 0.173 \\ [.5 ex]
   & $FDR$     &              &          & 0.078 & 0.090 & 0.095 & 0.080 & 0.090 & 0.095 & 0.080 & 0.090 & 0.095 \\ [.5 ex]
30 & $P(gCD)$ & 0.425 & 0.220 & 0.995 & 0.980 & 0.980 & 0.995 & 0.980 & 0.980 & 0.995 & 0.980 & 0.980 \\ [.5 ex]
   & $TPR$       &           &           & 0.234 & 0.184 & 0.216 & 0.234 & 0.184 & 0.216 & 0.236 & 0.185 & 0.217 \\ [.5 ex]
   & $FDR$       &          &          & 0.091 & 0.098 & 0.097 & 0.092 & 0.098 & 0.097 & 0.091 & 0.097 & 0.098 \\ [.5 ex]
\hline
\hline
\multicolumn{13}{|c|}{$m=200$, $T=50$, $\tau = 25$, $\lambda_1 = 0.25$, $\lambda_2 = 0.75$  } \\
\hline 
\hline  
 & & & & & & & & & & & & \\ [-1.5ex]
0 & $P(gCD)$ & 0.070 & 0.120 & 0.100 & 0.100 & 0.100 & 0.100 & 0.100 & 0.100 & 0.100 & 0.100 & 0.100 \\ [.5 ex]
\hline    
 & & & & & & & & & & & & \\ [-1.5ex] 
5 & $P(gCD)$ & 0.150 & 0.250 & 0.420 & 0.460 & 0.540 & 0.420 & 0.460 & 0.540 & 0.430 & 0.460 & 0.540 \\ [.5 ex]
   & $TPR$     &            &            & 0.112 & 0.122 & 0.152 & 0.112 & 0.122 & 0.152 & 0.114 & 0.122 & 0.154 \\ [.5 ex]
   & $FDR$     &           &            & 0.059 & 0.063 & 0.095 & 0.059 & 0.063 & 0.095 & 0.064 & 0.063 & 0.095 \\ [.5 ex]
10 & $P(gCD)$ & 0.260 & 0.640 & 0.680 & 0.690 & 0.790 & 0.680 & 0.690 & 0.790 & 0.690 & 0.690 & 0.800 \\ [.5 ex]
   & $TPR$       &            &           & 0.133 & 0.148 & 0.198 & 0.133 & 0.149 & 0.199 & 0.140 & 0.150 & 0.206 \\ [.5 ex]
   & $FDR$     &              &           & 0.092 & 0.100 & 0.094 & 0.092 & 0.095 & 0.096  & 0.099 & 0.100 & 0.106  \\ [.5 ex]
20 & $P(gCD)$ & 0.780 & 0.940 & 0.940 & 0.940 & 0.980 & 0.940 & 0.940 & 0.980 & 0.940 & 0.940 & 0.980 \\ [.5 ex]
   & $TPR$       &           &           & 0.187 & 0.190 & 0.261 & 0.189 & 0.192 & 0.270 & 0.196 & 0.196 & 0.276 \\ [.5 ex]
   & $FDR$       &           &           & 0.107 & 0.105 & 0.097 & 0.109 & 0.106 & 0.101 & 0.119 & 0.112 & 0.102 \\ [.5 ex]
\hline
\end{tabular} }
\end{center}
\end{table}

\begin{table}
\caption{(Cont'd) Comparison of gCU.5 and gCU1 with local tests obtained by combining the FDR controlling procedures BH, ABH and STS with minP, LR, and CU1 in $m$ independent Poisson series of the form: $X_1,\dots,X_{\tau} \iid \pois(\lambda_1)$, $X_{\tau+1},\dots,X_{T} \iid \pois(\lambda_2)$. Tests are conducted at level $\alpha=0.1$ and $n_{\text{cp}}$ channels (among $m$) contain a change at $\tau$.} 
\label{tab2:pois}
\begin{center}
\scalebox{0.85}[0.85]{
\begin{tabular}{ |c c|c c| c c c|c c c|c c c|  }
\hline
\multicolumn{13}{|c|}{$m=200$, $T=50$, $\tau = 40$, $\lambda_1 = 0.25$, $\lambda_2 = 0.75$  } \\
\hline \hline
             &  &  &  & minP & LR & CU1 & minP & LR & CU1     & minP & LR & CU1 \\
 $n_{\text{cp}}$ &   & gCU.5 & gCU1 & BH   & BH & BH   & ABH  & ABH & ABH & STS & STS & STS \\ [.5 ex]
\hline 
 & & & & & & & & & & & & \\ [-1.5ex]
10 & $P(gCD)$ & 0.200 & 0.190 & 0.510 & 0.450 & 0.440 &  0.510 & 0.470 & 0.440 & 0.520 & 0.460 & 0.450 \\ [.5 ex]
   & $TPR$     &           &            & 0.082 & 0.063 & 0.067 & 0.082 & 0.065 & 0.067 & 0.087 & 0.064 & 0.069 \\ [.5 ex]
   & $FDR$     &            &          & 0.089 & 0.083 & 0.087 & 0.089 & 0.089 & 0.087 & 0.096 & 0.089 & 0.097 \\ [.5 ex]
20 & $P(gCD)$ & 0.400 & 0.210  & 0.810 & 0.700 & 0.760 & 0.810 & 0.700 & 0.760  &  0.820 & 0.710 & 0.790  \\ [.5 ex]
   & $TPR$       &            &            & 0.128 & 0.094 & 0.101 & 0.133 & 0.096 & 0.103 & 0.136 & 0.096 & 0.110 \\ [.5 ex]
   & $FDR$     &              &            & 0.117 & 0.083 & 0.121 & 0.114 & 0.086 & 0.121 & 0.122 & 0.087 & 0.131 \\ [.5 ex]
30 & $P(gCD)$ & 0.720 & 0.290  & 0.910 & 0.880 & 0.910 & 0.910 & 0.880 & 0.910 & 0.920 & 0.880 & 0.920 \\ [.5 ex]
   & $TPR$       &           &             & 0.147 & 0.109 & 0.123 & 0.151 & 0.111 & 0.124 & 0.158 & 0.117 & 0.131 \\ [.5 ex]
   & $FDR$       &           &             & 0.076 & 0.068 & 0.078 & 0.076 & 0.068 & 0.077 & 0.082 & 0.079 & 0.087 \\ [.5 ex]
\hline
\hline
\multicolumn{13}{|c|}{$m=1000$, $T=10$, $\tau = 5$, $\lambda_1 = 0.30$, $\lambda_2 = 3.5$ } \\
\hline 
\hline  
 & & & & & & & & & & & & \\ [-1.5ex]
 0 & $P(gCD)$ & 0.090 & 0.040 & 0.070 & 0.060 & 0.090 & 0.070 & 0.060 & 0.090 & 0.070 & 0.060 & 0.090 \\ [.5 ex]
\hline   
 & & & & & & & & & & & & \\ [-1.5ex]
2 & $P(gCD)$ & 0.210 & 0.380 & 0.630 & 0.600 & 0.630 & 0.630 & 0.600 & 0.630 &  0.630 & 0.600 & 0.630 \\ [.5 ex]
   & $TPR$     &           &            & 0.410 & 0.395 & 0.420 & 0.410 & 0.395 & 0.420  & 0.410 & 0.395 & 0.420  \\ [.5 ex]
   & $FDR$     &           &            & 0.047 & 0.032 & 0.071 &  0.047 & 0.032 & 0.071  &  0.047 & 0.032 & 0.071   \\ [.5 ex]
4 & $P(gCD)$ & 0.450 & 0.700 & 0.800 & 0.780 & 0.820 & 0.800 & 0.780 & 0.820 & 0.800 & 0.780 & 0.820 \\ [.5 ex]
   & $TPR$       &            &           & 0.400 & 0.405 & 0.428 & 0.400 & 0.405 & 0.428 & 0.400 & 0.405 & 0.428 \\ [.5 ex]
   & $FDR$     &              &          & 0.049 & 0.038 & 0.045 & 0.049 & 0.038 & 0.045 & 0.049 & 0.038 & 0.045 \\ [.5 ex]
10 & $P(gCD)$ & 0.940 & 0.990 & 0.990 & 0.980 & 0.990 & 0.990 & 0.980 & 0.990 & 0.990 & 0.980 & 0.990 \\ [.5 ex]
   & $TPR$       &           &           & 0.474 & 0.533 & 0.560 & 0.474 & 0.533 & 0.560 & 0.474 & 0.533 & 0.560 \\ [.5 ex]
   & $FDR$       &          &          & 0.078 & 0.040 & 0.077 & 0.078 & 0.040 & 0.077 & 0.078 & 0.040 & 0.077 \\ [.5 ex]
\hline
\hline
\multicolumn{13}{|c|}{$m=1000$, $T=10$, $\tau = 7$, $\lambda_1 = 0.30$, $\lambda_2 = 3.5$  } \\
\hline 
\hline  
 & & & & & & & & & & & & \\ [-1.5ex] 
2 & $P(gCD)$ & 0.160 & 0.170 & 0.670 & 0.600 & 0.690 & 0.670 & 0.600 & 0.690 & 0.670 & 0.600 & 0.690 \\ [.5 ex]
   & $TPR$     &            &            & 0.440 & 0.370 & 0.475 & 0.440 & 0.370 & 0.475  & 0.440 & 0.370 & 0.475  \\ [.5 ex]
   & $FDR$     &           &            & 0.072 & 0.027 & 0.065 & 0.072 & 0.027 & 0.065 & 0.072 & 0.027 & 0.065 \\ [.5 ex]
4 & $P(gCD)$ & 0.490 & 0.310 & 0.840 & 0.760 & 0.870 & 0.840 & 0.760 & 0.870 & 0.840 & 0.760 & 0.870 \\ [.5 ex]
   & $TPR$       &            &           & 0.463 & 0.375 & 0.515 & 0.463 & 0.375 & 0.515 & 0.463 & 0.375 & 0.515 \\ [.5 ex]
   & $FDR$     &              &           & 0.047 & 0.034 & 0.050 & 0.047 & 0.034 & 0.050  & 0.047 & 0.034 & 0.050  \\ [.5 ex]
10 & $P(gCD)$ & 0.940 & 0.510 & 1 & 1 & 1 & 1 & 1 & 1 & 1 & 1 & 1 \\ [.5 ex]
   & $TPR$       &           &           & 0.557 & 0.511 & 0.603 & 0.557 & 0.511 & 0.603 & 0.557 & 0.511 & 0.603 \\ [.5 ex]
   & $FDR$       &           &           & 0.073 & 0.042 & 0.078 & 0.073 & 0.042 & 0.078 & 0.073 & 0.042 & 0.078 \\ [.5 ex]
\hline
\end{tabular} }
\end{center}
\end{table}

\begin{table}
\caption{(Cont'd) Comparison of gCU.5 and gCU1 with local tests obtained by combining the FDR controlling procedures BH, ABH and STS with minP, LR, and CU1 in $m$ independent Poisson series of the form: $X_1,\dots,X_{\tau} \iid \pois(\lambda_1)$,\, $X_{\tau+1},\dots,X_{T} \iid \pois(\lambda_2)$. Tests are conducted at level $\alpha=0.1$ and $n_{\text{cp}}$ channels (among $m$) contain a change at $\tau$.} 
\label{tab3:pois}
\begin{center}
\scalebox{0.85}[0.85]{
\begin{tabular}{ |c c|c c| c c c|c c c|c c c|  }
\hline
\multicolumn{13}{|c|}{$m=200$, $T=10$, $\tau = 5$, $\lambda_1 = 0.3$, $\lambda_2 = 3$ } \\
\hline \hline
             &  &  &  & minP & LR & CU1 & minP & LR & CU1     & minP & LR & CU1 \\
 $n_{\text{cp}}$ &   & gCU.5 & gCU1 & BH   & BH & BH   & ABH  & ABH & ABH & STS & STS & STS \\ [.5 ex]
\hline 
 & & & & & & & & & & & & \\ [-1.5ex]
0 & $P(gCD)$ & 0.096 & 0.101 & 0.073 & 0.067 & 0.083 & 0.073 & 0.067 & 0.083 & 0.074 & 0.067 & 0.083 \\ [.5 ex]
\hline   
 & & & & & & & & & & & & \\ [-1.5ex]
1 & $P(gCD)$ & 0.140 & 0.220 & 0.340 & 0.410 & 0.380 &  0.340 & 0.410 & 0.380 & 0.340 & 0.410 & 0.380 \\ [.5 ex]
   & $TPR$     &           &            & 0.290 & 0.400 & 0.350 & 0.290 & 0.400 & 0.350 & 0.290 & 0.400 & 0.350 \\ [.5 ex]
   & $FDR$     &            &          & 0.088 & 0.037 & 0.055 & 0.088 & 0.037 & 0.055  & 0.088 & 0.037 & 0.055  \\ [.5 ex]
2 & $P(gCD)$ & 0.360 & 0.500  & 0.520 & 0.590 & 0.580 & 0.520 & 0.590 & 0.580  &  0.520 & 0.590 & 0.580  \\ [.5 ex]
   & $TPR$       &            &            & 0.325 & 0.370 & 0.360 & 0.325 & 0.370 & 0.360 & 0.330 & 0.370 & 0.360 \\ [.5 ex]
   & $FDR$     &              &            & 0.045 & 0.032 & 0.032 & 0.045 & 0.032 & 0.033 & 0.045 & 0.032 & 0.032 \\ [.5 ex]
6 & $P(gCD)$ & 0.900 & 0.975  & 0.860 & 0.970 & 0.960 & 0.860 & 0.970 & 0.960  & 0.860 & 0.970 & 0.960  \\ [.5 ex]
   & $TPR$       &           &             & 0.407 & 0.492 & 0.473 & 0.407 & 0.500 & 0.473  & 0.407 & 0.492 & 0.473  \\ [.5 ex]
   & $FDR$       &           &             & 0.053 & 0.068 & 0.062 & 0.055 & 0.068 & 0.062 & 0.053 & 0.068 & 0.062 \\ [.5 ex]
\hline
\hline
\multicolumn{13}{|c|}{ $m=200$, $T=10$, $\tau = 7$, $\lambda_1 = 0.3$, $\lambda_2 = 3$ } \\
\hline 
\hline
 & & & & & & & & & & & & \\ [-1.5ex]
1 & $P(gCD)$ & 0.110 & 0.120 & 0.440 & 0.430 & 0.480 & 0.440 & 0.430 & 0.480  &  0.440 & 0.430 & 0.480  \\ [.5 ex]
   & $TPR$     &           &            & 0.390 & 0.400 & 0.430 & 0.390 & 0.400 & 0.430  & 0.390 & 0.400 & 0.430  \\ [.5 ex]
   & $FDR$     &           &            & 0.103 & 0.052 & 0.070 & 0.103 & 0.052 & 0.070  &  0.103 & 0.052 & 0.070  \\ [.5 ex]
2 & $P(gCD)$ & 0.350 & 0.210 & 0.590 & 0.540 & 0.710 & 0.590 & 0.540 & 0.710 & 0.590 & 0.540 & 0.710 \\ [.5 ex]
   & $TPR$       &            &           & 0.395 & 0.320 & 0.460 & 0.395 & 0.320 & 0.460 & 0.395 & 0.320 & 0.460 \\ [.5 ex]
   & $FDR$     &              &          & 0.068 & 0.033 & 0.042 & 0.068 & 0.033 & 0.043 & 0.068 & 0.033 & 0.042 \\ [.5 ex]
6 & $P(gCD)$ & 0.760 & 0.570 & 0.950 & 0.960 & 0.980 &  0.950 & 0.960 & 0.980 &  0.950 & 0.960 & 0.980 \\ [.5 ex]
   & $TPR$       &           &           & 0.517 & 0.470 & 0.538 & 0.517 & 0.472 & 0.538 & 0.517 & 0.470 & 0.538 \\ [.5 ex]
   & $FDR$       &          &          & 0.061 & 0.067 & 0.067 & 0.061 & 0.069 & 0.067 & 0.061 & 0.067 & 0.067 \\ [.5 ex]
\hline
\end{tabular} }
\end{center}
\end{table}

\begin{table}
\caption{[Multiple changepoints] Comparison of gCU.5 and gCU1 with local tests obtained by combining the FDR controlling procedures BH, ABH and STS with minP, LR, and CU1 in $m$ independent Poisson series of the form: $X_1,\dots,X_{\tau_1} \iid \pois(\lambda_1)$, $X_{\tau_1+1},\dots,X_{\tau_2} \iid \pois(\lambda_2)$, $X_{\tau_2+1},\dots,X_{\tau_3} \iid \pois(\lambda_3)$ and $X_{\tau_3+1},\dots,X_{T} \iid \pois(\lambda_4)$ where $T=50$ and $m=200$. Tests are conducted at level $\alpha=0.1$ and $n_{\text{cp}}$ channels (among $m$) contain 3 changes at locations $\tau_1 = 10, \tau_2 = 20$ and $\tau_3 = 35$.} 
\label{tab4:pois}
\begin{center}
\scalebox{0.85}[0.85]{
\begin{tabular}{ |c c|c c| c c c|c c c|c c c|  }
\hline
\multicolumn{13}{|c|}{ $\lambda_1 = 0.2$, $\lambda_2 = 0.5$, $\lambda_3 = 0.8$, $\lambda_4 = 1.4$ } \\
\hline 
\hline
               &  & & & minP & LR & CU1 & minP & LR & CU1     & minP & LR & CU1 \\
 $n_{\text{cp}}$ &  & gCU.5 & gCU1 & BH   & BH & BH   & ABH  & ABH & ABH & STS & STS & STS \\ [.5 ex]
\hline 
 & & & & & & & & & & & & \\ [-1.5ex]
0 & $P(gCD)$ & 0.102 & 0.098 & 0.118 & 0.116 & 0.114 &  0.118 & 0.116 & 0.114 &  0.122 & 0.116 & 0.114 \\ [.5 ex]
\hline   
 & & & & & & & & & & & & \\ [-1.5ex]
2 & $P(gCD)$ & 0.210   & 0.266 & 0.598 & 0.594 & 0.614 & 0.598 & 0.594 & 0.614 & 0.602 & 0.596 & 0.614 \\ [.5 ex]
   & $TPR$     &             &           & 0.364  & 0.369  & 0.382  & 0.364  & 0.369  & 0.382  &  0.365  & 0.371  & 0.383  \\ [.5 ex]
   & $FDR$     &             &          & 0.100  & 0.095 & 0.095   & 0.100  & 0.095 & 0.095  & 0.105  & 0.097 & 0.097   \\ [.5 ex]
4 & $P(gCD)$ & 0.344  &  0.502 & 0.816 & 0.826 & 0.838 & 0.816 & 0.826 & 0.838 & 0.818 & 0.830 & 0.846 \\ [.5 ex]
   & $TPR$       &           &           & 0.429 & 0.435 & 0.460  & 0.430 & 0.435 & 0.460  & 0.431 & 0.439 & 0.468  \\ [.5 ex]
   & $FDR$     &             &          & 0.092 & 0.099 & 0.087  & 0.092 & 0.099 & 0.088  & 0.097 & 0.099 & 0.088 \\ [.5 ex]
8 & $P(gCD)$ & 0.782 & 0.856  & 0.968 & 0.972  & 0.984  & 0.968 & 0.974  & 0.984  & 0.970 & 0.976  & 0.984   \\ [.5 ex]
   & $TPR$       &          &           & 0.495 & 0.511 & 0.547 & 0.500 & 0.515 & 0.549 & 0.499 & 0.516 & 0.553 \\ [.5 ex]
   & $FDR$       &          &           & 0.098  & 0.097 & 0.092  & 0.098  & 0.100 & 0.094 & 0.100 & 0.101 & 0.095 \\ [.5 ex]
\hline
\hline
\multicolumn{13}{|c|}{  $\lambda_1 = 0.2$, $\lambda_2 = 0.8$, $\lambda_3 = 0.5$, $\lambda_4 = 1.4$ } \\
\hline 
\hline  
 & & & & & & & & & & & & \\ [-1.5ex] 
2 & $P(gCD)$ & 0.182   & 0.186 & 0.468 & 0.464 & 0.500 & 0.468 & 0.464 & 0.502 & 0.474 & 0.466 & 0.504 \\ [.5 ex]
   & $TPR$     &             &           & 0.263 & 0.258 & 0.292  & 0.263 & 0.258 & 0.292   & 0.268 & 0.259 & 0.296    \\ [.5 ex]
   & $FDR$     &             &           & 0.099 & 0.097 & 0.098  & 0.099 & 0.097 & 0.100   & 0.101 & 0.098 & 0.098   \\ [.5 ex]
4 & $P(gCD)$ & 0.320   & 0.288 & 0.710 & 0.690 & 0.752  & 0.710 & 0.690 & 0.754  & 0.712 & 0.694 & 0.760    \\ [.5 ex]
   & $TPR$     &             &           & 0.295 & 0.289 & 0.336  & 0.295 & 0.289 & 0.337   & 0.300 & 0.292 & 0.340  \\ [.5 ex]
   & $FDR$     &             &          & 0.096 & 0.100 & 0.084   & 0.097 & 0.101 & 0.085  & 0.100 & 0.101 & 0.090  \\ [.5 ex]
8 & $P(gCD)$ & 0.738  & 0.640 & 0.908 & 0.890 & 0.934 & 0.910 & 0.890 & 0.934  & 0.912 & 0.890 & 0.940     \\ [.5 ex]
   & $TPR$     &            &           & 0.373 & 0.377 & 0.406  & 0.376 & 0.380 & 0.408  & 0.379 & 0.379 & 0.412    \\ [.5 ex]
   & $FDR$     &            &          & 0.096 & 0.098 & 0.095    & 0.096 & 0.098 & 0.097  & 0.098 & 0.099 & 0.099  \\ [.5 ex]
\hline
\end{tabular} }
\end{center}
\end{table}
  
\end{document}